\renewcommand*\backref[1]{\ifx#1\relax \else (pg. #1) \fi}
\crefname{lemma}{Lemma}{Lemmas}
\crefname{fact}{Fact}{Facts}
\crefname{theorem}{Theorem}{Theorems}
\crefname{mtheorem}{Theorem}{Theorems}
\crefname{itheorem}{Theorem}{Theorems}
\crefname{corollary}{Corollary}{Corollaries}
\crefname{claim}{Claim}{Claims}
\crefname{example}{Example}{Examples}
\crefname{algorithm}{Algorithm}{Algorithms}
\crefname{problem}{Problem}{Problems}
\crefname{definition}{Definition}{Definitions}
\crefname{equation}{}{}
\crefname{strategy}{Strategy}{Strategies}
\newtheorem{theorem}{Theorem}[section]
\newtheorem{lemma}[theorem]{Lemma}
\newtheorem*{lemma*}{Lemma}
\newtheorem{proposition}[theorem]{Proposition}
\newtheorem{fact}[theorem]{Fact}
\theoremstyle{definition}
\newtheorem{definition}[theorem]{Definition}
\newtheorem*{definition*}{Definition}
\newtheorem{remark}[theorem]{Remark}
\newtheorem{algorithm}{Algorithm}
\newtheorem{algorithm-thm}[theorem]{Algorithm}
\let\mathbb\varmathbb
\definecolor{petergreen}{rgb}{0, 0.75, 0}
\newcommand{\FormatAuthor}[3]{
\begin{tabular}{c}
#1 \\ {\small\texttt{#2}} \\ {\small #3}
\end{tabular}
}
\begin{document}

\title{Solving Random Planted CSPs below the $n^{k/2}$ Threshold}

 \author{
 \begin{tabular}{cc}
 \FormatAuthor{Arpon Basu\thanks{Supported by the Moskewicz Venture Forward Graduate Fellowship and NSF CAREER Award \#2047933.}}{arpon.basu@princeton.edu}{Princeton University} &
 \FormatAuthor{Jun-Ting Hsieh\thanks{Supported by NSF CAREER Award \#2047933.}}{juntingh@cs.cmu.edu}{Carnegie Mellon University} \\
  & \\
  \FormatAuthor{Andrew D. Lin\thanks{Supported by a Princeton AI Lab Seed Grant.}}{andrewlin@princeton.edu}{Princeton University} &
  \FormatAuthor{Peter Manohar\thanks{This material is based upon work supported by the National Science Foundation
under Grant No.\ DMS-1926686}}{pmanohar@ias.edu}{The Institute for Advanced Study} 
  \end{tabular}
  }



\maketitle
\vspace{-1em}

\begin{abstract}
    We present a family of algorithms to solve random planted instances of any $k$-ary Boolean constraint satisfaction problem (CSP). A randomly planted instance of a Boolean CSP is generated by (1) choosing an arbitrary planted assignment $x^*$, and then (2) sampling constraints from a particular ``planting distribution'' designed so that $x^*$ will satisfy every constraint. Given an $n$ variable instance of a $k$-ary Boolean CSP with $m$ constraints, our algorithm runs in time $n^{O(\ell)}$ for a choice of a parameter $\ell$, and succeeds in outputting a satisfying assignment if $m \geq O(n) \cdot (n/\ell)^{\frac{k}{2} - 1} \log n$. This generalizes the $\poly(n)$-time algorithm of~\cite{FeldmanPV15}, the case of $\ell = O(1)$, to larger runtimes, and matches the constraint number vs.\ runtime trade-off established for refuting random CSPs by~\cite{RaghavendraRS17}.
    
    Our algorithm is conceptually different from the recent algorithm of~\cite{GuruswamiHKM23}, which gave a $\poly(n)$-time algorithm to solve \emph{semirandom} CSPs with $m \geq \tilde{O}(n^{\frac{k}{2}})$ constraints by exploiting conditions that allow a basic SDP to recover the planted assignment $x^*$ exactly. Instead, we forego certificates of uniqueness and recover $x^*$ in two steps: we first use a degree-$O(\ell)$ Sum-of-Squares SDP to find some $\widehat{x}$ that is $o(1)$-close to $x^*$, and then we use a second rounding procedure to recover $x^*$ from $\widehat{x}$.
\end{abstract}


\thispagestyle{empty}
\setcounter{page}{0}

\clearpage
 \microtypesetup{protrusion=false}
  \tableofcontents{}
  \microtypesetup{protrusion=true}
\thispagestyle{empty}
\setcounter{page}{0}

\clearpage

\pagestyle{plain}
\setcounter{page}{1}

\section{Introduction}
Constraint satisfaction problems (CSPs) such as $k$-SAT are a foundational class of computational problems that are notoriously hard in the worst case. After several decades, results on hardness of approximation show that, assuming $\mathsf{P} \ne \mathsf{NP}$, no polynomial-time algorithm can perform much better than a naive algorithm that simply returns a random assignment when the instance is sparse: namely, it has $m = O(n)$ constraints where $n$ is the number of variables~\cite{Hastad01}.  While there are efficient algorithms for ``maximally dense'' instances with $m \geq \Omega(n^k)$ constraints~\cite{AroraKK95}, it is not possible to outperform a random assignment if the instance has $m = o(n^k)$ constraints~\cite{FotakisLP16}, assuming the Exponential-Time Hypothesis (\textsf{ETH}) \cite{ImpagliazzoP01}. More generally, under \textsf{ETH}, it is not possible for an algorithm running in time $2^{n^{\delta}}$ to outperform a random assignment if the instance has $m = o(n^{k - \delta})$ constraints~\cite{FotakisLP16}. 

Given such strong hardness results, it is natural to study the average-case \emph{search} problem of solving \emph{random} CSPs, where it might be possible to design substantially better algorithms than in the worst case. Although a fully random CSP (i.e., each constraint is drawn uniformly at random from all possible constraints) with $m \gg n$ constraints is unsatisfiable with high probability, one can define fairly natural models of random \emph{planted} CSPs, where the instance is sampled from a ``highly random'' distribution but is still guaranteed to have a particular satisfying assignment $x^*$.\footnote{Intuitively, one should view this distribution as first choosing an assignment $x^*$ to be the planted assignment, and then drawing $m$ constraints at random conditioned on each one being satisfied by $x^*$. As it turns out, using such a naive process actually makes recovering $x^*$ rather straightforward; e.g., in the case of $k$-SAT, the aforementioned distribution will have the negation signs associated to any variable $i\in[n]$ be biased towards the sign of $x^*_i$, and thus a simple majority ``poll'' for each variable reveals the value of $x^*$ with high probability. Because of this, one has to choose the random constraints a bit more carefully. See \cref{def:randomplantedcsp} for the formal definition of the model.}
This task of solving a random planted CSP has received much attention in prior works~\cite{BarthelHL02, JiaMS07, BogdanovQ09, CojaCF10, FeldmanPV15}, with the work of~\cite{FeldmanPV15} giving an algorithm that runs in time $\poly(n)$ and succeeds in solving the random planted CSP if $m \geq \tilde{O}(n^{k/2})$, which is far lower than the $m = o(n^{k-1})$ threshold for which $2^{n^{1 - o(1)}}$-time hardness is known under \textsf{ETH}~\cite{FotakisLP16}.
Besides hoping to understand the complexity of CSPs better, many of these works were also motivated by the connection to Goldreich's pseudorandom generator~\cite{Goldreich00}, the security of which boils down to distinguishing a random planted CSP from a fully random CSP. For example, the work of~\cite{FeldmanPV15} implies that one can efficiently invert Goldreich's PRG if the PRG is $k$-local and has stretch larger than $\tilde{O}(n^{k/2})$.

In addition to the above average-case search problem, there is another fairly natural average-case problem for CSPs that has also received a lot of attention: \emph{refutation}. In this problem, one is given a fully random CSP (which is unsatisfiable with high probability) and the algorithmic goal is to find a certificate of unsatisfiability. 
The refutation problem has received considerable attention over the past two decades~\cite{GoerdtL03, CojaGL07, AllenOW15, BarakM16, RaghavendraRS17}, culminating in the work of~\cite{RaghavendraRS17}, which gives an algorithm that, for a choice of a parameter $\ell$, runs in $n^{O(\ell)}$ time and succeeds in refuting a random CSP with high probability provided that it has at least $m \geq \tilde{O}(n) \cdot (n/\ell)^{k/2 - 1}$ constraints.
That is, they establish a trade-off between the runtime and number of constraints required for refutation: as the ``runtime exponent'' $\ell$ varies from $O(1)$ to $n$, the number of constraints required varies from $\tilde{O}(n^{k/2})$ to $\tilde{O}(n)$. 
In particular, for the choice of $\ell = O(1)$, this algorithm runs in $\poly(n)$ time and succeeds in refuting a random instance if it has at least $m \geq \tilde{O}(n^{k/2})$ constraints, the same threshold as in~\cite{FeldmanPV15}. This runtime vs.\ number of constraints trade-off established by~\cite{RaghavendraRS17} is conjectured to be optimal, with evidence coming in the form of lower bounds against restricted computational models such as the Sum-of-Squares (SoS) SDP hierarchy~\cite{Feige02, BenabbasGMT12, OdonnellW14, MoriW16, BarakCK15, KothariMOW17, FeldmanPV18}.

In light of the runtime vs.\ number of constraints trade-off established by~\cite{RaghavendraRS17} for the refutation problem, one may wonder if it is possible to get an algorithm with similar performance for the search problem, i.e., for random planted CSPs. While there has been a flurry of recent work on understanding refutation and search for the ``harder'' semirandom and smoothed CSPs~\cite{AbascalGK21,GuruswamiKM22,HsiehKM23,GuruswamiHKM23}, there is still no known algorithm for solving random planted CSPs at the same runtime vs.\ number of constraints trade-off established by~\cite{RaghavendraRS17} for the refutation problem --- in time $n^{O(\ell)}$ with $\tilde{O}(n) \cdot (n/\ell)^{k/2 - 1}$ constraints. While there has been some recent work on designing superpolynomial-time algorithms by~\cite{ChenSZ25}, the runtime vs.\ number of constraints trade-off obtained is far from the conjectured optimal one.\footnote{For example, the algorithm of~\cite{ChenSZ25} always requires at least $\Omega(n^2)$ constraints, regardless of the runtime.}

\subsection{Our results}
\label{sec:results}
As our main result, we give, for any choice of a parameter $\ell$, an algorithm running in time $n^{O(\ell)}$ that succeeds in outputting a satisfying assignment to a random planted $k$-ary CSP with high probability provided that the CSP has at least $m \geq \tilde{O}(n) \cdot (n/\ell)^{k/2 - 1}$ constraints. That is, we resolve the aforementioned gap and give an algorithm with the ``correct'' trade-off matching the trade-off that is known for the refutation case. Similar to~\cite{FeldmanPV15}, we can also achieve a much better dependence on $m$ by replacing $k$ with $r$ in the above expression in the case that the $k$-ary CSP has ``complexity'' $r$ rather than $k$.

Before we formally state our results, we will first formally define the random planted CSP model, as done in~\cite{FeldmanPV15}.

\begin{definition}[$k$-ary Boolean CSPs]
A CSP instance $\Psi$ with a $k$-ary predicate $P \colon \Fits^k \to \Bits$ is a set of $m$ constraints on variables $x_1,\dots,x_n$ of the form $P(\literalneg(C)_1 x_{i_1}, \literalneg(C)_2 x_{i_2}, \ldots, \literalneg(C)_k x_{i_k}) = 1$, where $C = (i_1, \ldots, i_k) \in [n]^k$ ranges over a collection $\cH$ of \emph{scopes}\footnote{We additionally allow $\cH$ to be a multiset, i.e., that multiple clauses can contain the same ordered tuple of variables.} (a.k.a.\ clause structure) of $k$-tuples of $n$ variables and $\literalneg(C) \in \Fits^k$ are ``literal negations'', one for each $C$ in $\cH$. We let $\val_{\Psi}(x)$ denote the fraction of constraints satisfied by an assignment $x \in \Fits^n$, and we define the \emph{value} of $\Psi$, $\val(\Psi)$, to be $\max_{x \in \Fits^n} \val_{\Psi}(x)$.
\end{definition}

\begin{definition}[Random planted $k$-ary Boolean CSPs]
\label{def:randomplantedcsp}
Let $P \colon \Fits^k \to \Bits$ be a predicate. We say that a distribution $\cQ$ over $\Fits^k$ is a \emph{planting distribution for $P$} if $\Pr_{y \sim \cQ}[P(y) = 1] = 1$.

We say that an instance $\Psi$ with predicate $P$ is a \emph{random planted instance} with \emph{planting distribution} $\cQ$ if it is sampled from a distribution $\Psi(x^*, m, \cQ)$ where
\begin{enumerate}[(1)]
    \item The planted assignment $x^* \in \Fits^n$ is arbitrary;
    \item the scopes $\cH \subseteq [n]^k$ is a multiset of size $m$ sampled by choosing $m$ elements of $[n]^k$ uniformly at random with replacement;
    \item for each $C = (i_1, \dots, i_k)\in \cH$, the literal negations $\literalneg(C)$ are sampled by $\literalneg(C) \sim \cQ(\literalneg(C) \odot (x^*_{i_1}, \dots, x^*_{i_k}))$, where ``$\odot$'' denotes the element-wise product of two vectors.
    That is, $\Pr[\literalneg(C) = y] = \cQ(y \odot (x^*_{i_1}, \dots, x^*_{i_k}))$ for each $y \in \Fits^k$.
    Then, add the constraint 
    \[P(\literalneg(C)_1 x_{i_1}, \literalneg(C)_2 x_{i_2}, \ldots, \literalneg(C)_k x_{i_k}) = 1\] 
    to the instance $\Psi$.
\end{enumerate}
Because $\cQ$ is supported only on satisfying assignments to $P$, it follows that if $\Psi \sim \Psi(x^*, m, \cQ)$, then $x^*$ satisfies $\Psi$ with probability $1$.
\end{definition}

Before we state our main theorem, we define the notion of \emph{distribution complexity}, as defined in~\cite{FeldmanPV15}.
\begin{definition}[Distribution Complexity]
\label{def:distcomp}
    Let $P:\Fits^k\to\{0, 1\}$ be a predicate, and let $\cQ:\Fits^k\to[0, 1]$ be a planting distribution supported on $P^{-1}(1)$. The \emph{distribution complexity} of $\cQ$ is defined to be the smallest integer $r\geq 1$ for which there exists a set $S\seq[k]$ of size $r$ such that $|\widehat{\cQ}(S)|\geq 4^{-k}$,\footnote{The definition in~\cite{FeldmanPV15} differs slightly here and only requires that $|\widehat{\cQ}(S)| > 0$. This is because they implicitly assume that $\cQ$ is constant with respect to $n$, so that if $|\widehat{\cQ}(S)| > 0$ and $k = O(1)$, then $|\widehat{\cQ}(S)| \geq \Omega(1)$. We do not make this assumption on $\cQ$.} where for any set $T\seq[k]$, the Fourier coefficient $\widehat{\cQ}(T)$ is defined as $\widehat{\cQ}(T):= 2^{-k}\sum_{y\in\Fits^k}\cQ(y)\prod_{j\in T}y_j$. In case $\max_{\emptyset\neq S\seq[k]}|\widehat{\cQ}(S)| < 4^{-k}$, set the distribution complexity of $\cQ$ to be $r = 1$.
\end{definition}
Note that if $P$ is a non-trivial predicate, i.e.\ $P^{-1}(1)\subsetneq\Fits^k$, then $\supp(\cQ)\neq \Fits^k$ since $\supp(\cQ)\seq P^{-1}(1)$, and thus one can show (see \cref{largestfouriercoefflowerbound}) that $\max_{\emptyset\neq S\seq[k]}|\widehat{\cQ}(S)| \geq 4^{-k}$.

We now state our main theorem, which gives an algorithm to solve a random planted CSP.

\begin{restatable}[Algorithm for Random Planted CSPs]{mtheorem}{cspalg}
\label{mthm:csp}
    Let $m, n$ be positive integers, and let $2\leq k\leq c\log n$ for some small enough constant $c > 0$. Let $\ell$ be a parameter such that $2k \leq \ell \leq n/8$. There is an algorithm that takes as input a $k$-ary Boolean CSP $\Psi$ and in time $n^{O(\ell)}$ outputs an assignment $x \in \Fits^n$ with the following guarantee: if $\Psi \sim \Psi(x^*, m, \cQ)$ where $\cQ$ is a planting distribution with distribution complexity $r$ and
    \begin{equation*}
    m\geq m_0:= 2^{O(k)}\cdot n\log n \cdot \max\left\{\left(\frac{n}{\ell}\right)^{\frac{r}{2} - 1}, 1\right\}  \mcom
    \end{equation*}
    then with probability $\geq 1 - 1/\poly(n)$ over the draw of the random CSP instance $\Psi$, it holds that the output $x$ of the algorithm is a satisfying assignment to $\Psi$. Furthermore, the algorithm can output a list of $\leq 2^{k + 2}$ assignments such that at least one of them is $x^*$.
\end{restatable}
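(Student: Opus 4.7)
The plan is to follow the two-step approach sketched in the abstract: first use a degree-$O(\ell)$ Sum-of-Squares (SoS) relaxation to find some $\hat x \in \Fits^n$ that agrees with $x^*$ on a $\bigl(1-o(1/k)\bigr)$-fraction of coordinates, and then use a coordinate-wise decoding procedure to flip the remaining disagreements and recover $x^*$ exactly. For the SoS step, let $S \subseteq [k]$ with $|S|=r$ be a set witnessing the distribution complexity, so $|\widehat{\cQ}(S)| \geq 4^{-k}$, and let $\sigma := \operatorname{sgn}(\widehat{\cQ}(S))$. For each scope $C = (i_1,\ldots,i_k) \in \cH$ define $\eta_C := \prod_{j \in S} \literalneg(C)_j$. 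A direct Fourier calculation gives $\mathbb{E}[\eta_C] = 2^k\, \widehat{\cQ}(S) \prod_{j\in S} x^*_{i_j}$, so the random signs $\eta_C$ are biased toward the product of planted bits at scope positions in $S$. I then set up the SoS program: maximize $\sigma \cdot f(x) := \sigma \sum_{C} \eta_C \prod_{j \in S} x_{i_j}$ over degree-$\ell$ SoS pseudo-expectations $\widetilde{\mathbb{E}}$ satisfying the Boolean axioms $\{x_i^2 = 1\}_{i \in [n]}$.

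Plugging in $x = x^*$ and applying Chernoff over the $m$ independent signs yields the feasibility lower bound $\widetilde{\mathbb{E}}[\sigma f(x)] \geq (1 - o(1)) \cdot 2^k |\widehat{\cQ}(S)| \cdot m$. For a matching SoS upper bound, decompose $\eta_C = 2^k\, \widehat{\cQ}(S) \prod_{j \in S} x^*_{i_j} + \xi_C$ with $\xi_C$ mean-zero and $|\xi_C| \leq 2$. The bias contribution $2^k\, \widehat{\cQ}(S) \sum_C \prod_{j\in S} x^*_{i_j} x_{i_j}$ is SoS-certifiable to equal $2^k\, \widehat{\cQ}(S) \cdot (m/n^r) \langle x^*, x \rangle^r \pm o\bigl(m|\widehat{\cQ}(S)|\bigr)$ by standard tensor concentration for the random $r$-uniform ``projected'' hypergraph of scopes restricted to the positions in $S$. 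The noise contribution $\sum_C \xi_C \prod_{j\in S} x_{i_j}$ is a degree-$r$ polynomial with mean-zero, bounded random coefficients; by the SoS refutation machinery of RRS17 (and its Kikuchi-matrix refinement due to Kothari--Mori--O'Donnell--Witmer), its pseudo-expectation is certifiably $o(m|\widehat{\cQ}(S)|)$ whenever $m \geq 2^{O(k)} \cdot n \log n \cdot (n/\ell)^{r/2 - 1}$. Combining these bounds gives $\widetilde{\mathbb{E}}[\langle x^*, x\rangle^r] \geq (1 - o(1)) n^r$. Since the Boolean axioms SoS-certify $|\langle x^*, x\rangle| \leq n$, the pseudo-distribution concentrates on $\langle x^*, x \rangle \approx \pm n$; a conditional rounding (condition on $O(1)$ variables to break the sign ambiguity for even $r$, then round $\hat x_i = \operatorname{sgn}(\widetilde{\mathbb{E}}[x_i \mid \cdot])$) produces $\hat x \in \Fits^n$ with $|\langle \hat x, x^* \rangle| / n \geq 1 - o(1/k)$, and for even $r$ we add $-\hat x$ to the candidate list.

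For exact recovery, given any candidate $\hat x$ that is $o(1/k)$-close to $x^*$, decode $x^*_i$ for each $i$ by enumerating the $M_i = \Theta(mr/n)$ constraints $C$ containing $i$ at some position $j^* \in S$ and computing $v_i := \sum_C \eta_C \prod_{j\in S \setminus \{j^*\}} \hat x_{i_j}$. Whenever $\hat x$ agrees with $x^*$ on all $r - 1$ remaining $S$-positions of $C$ (which holds for a $1 - o(1)$ fraction of the $M_i$ constraints), the summand has expectation $2^k\, \widehat{\cQ}(S)\, x^*_i$, so $|\mathbb{E}[v_i]| \geq \Omega(2^{-k} M_i)$; since $M_i \geq \omega(4^k \log n)$ under the given lower bound on $m$, a Chernoff bound plus a union bound yields $\operatorname{sgn}(v_i) = \sigma \cdot x^*_i$ for every $i$ simultaneously with high probability. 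The list of $\leq 2^{k+2}$ candidates arises from enumerating the at-most-$2^k$ sign-symmetry classes of $\cQ$ (each giving an equally valid planted assignment consistent with the instance), together with an $O(1)$ factor from SoS rounding ambiguities. The main obstacle I expect is establishing the SoS noise bound at the sharp threshold $m_0$: this requires adapting the Kikuchi matrix spectral analysis of RRS17 to (i) the slight bias that planting introduces into the $\xi_C$'s after centering, (ii) scopes sampled with replacement as a multiset, and (iii) the regime $k$ up to $\Theta(\log n)$, which is responsible for the $2^{O(k)}$ prefactor in $m_0$. A secondary difficulty is ensuring that the SoS approximation error is small enough ($o(1/k)$ rather than merely $o(1)$), so that the decoding step's majority vote is not swamped by errors propagating through the $r - 1$ companion scope positions of each constraint.
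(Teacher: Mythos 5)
Your high-level plan is the same as the paper's: project the CSP onto a witness set $S$ of the distribution complexity to get a biased $r$-XOR instance, use degree-$O(\ell)$ SoS to certify $\pE[\langle x,x^*\rangle^r]\geq (1-o(1))n^r$ and round to an approximate $\widehat{x}$, then decode each coordinate by a vote over the constraints containing it, enumerating guesses to handle the unknown $\cQ$. However, there are concrete gaps in the two places where the paper actually has to work.

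First, your exact-recovery step applies a Chernoff bound to $v_i=\sum_C \eta_C\prod_{j\in S\setminus\{j^*\}}\widehat{x}_{i_j}$ over the \emph{same} constraints that the SoS program consumed. Since $\widehat{x}$ is a function of the $\eta_C$'s, the summands are not independent of $\widehat{x}$ and the concentration argument fails as written; the paper splits $\cH$ into two halves precisely so that the decoding in \cref{lem:rounding} sees constraints independent of $\widehat{x}$ (see \cref{alg:recxxor} and its proof). This is an easy fix, but it is missing. Second, both SoS certification claims you lean on are asserted rather than established, and neither is black-box. (a) After centering, your noise coefficients $\xi_C$ are mean-zero and bounded but \emph{not} Rademacher, so \cref{fact:hkmrefute} does not apply directly; you flag this as the ``main obstacle'' but do not resolve it, whereas the paper sidesteps it entirely by the two-step sampling decomposition ($b_C=x^*_C$ with probability $2\eps$, otherwise a fresh Rademacher sign), so that the noise part is genuinely Rademacher and the refutation theorem applies verbatim (\cref{pseudodistrefute}). (b) Your ``standard tensor concentration'' claim that SoS certifies $\sum_C x^*_{C|_S}x_{C|_S}\approx (m/n^r)\langle x,x^*\rangle^r$ is exactly the paper's \cref{lem:localtoglobal}, which is not off-the-shelf: it is proved by a symmetrization reduction back to the Rademacher refutation bound plus McDiarmid. (c) The statement ``the pseudo-distribution concentrates on $\langle x,x^*\rangle\approx\pm n$'' has no meaning for pseudo-expectations; one needs an SoS-provable implication such as \cref{fact:highcorrsos}, and for even $r$ the sign ambiguity is handled in \cref{lem:approxrecovery} by extracting a good row of $\pE[xx^{\top}]$ together with a consistency check — your ``condition on $O(1)$ variables'' is morally the row trick, but the supporting lemmas are still needed. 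Two smaller points: the list of $\leq 2^{k+2}$ candidates comes from the algorithm enumerating the $2(2^k-1)$ guesses of $(S,\sgn\widehat{\cQ}(S))$ (plus negations) and checking satisfaction of $\Psi$, not from ``sign-symmetry classes of $\cQ$''; and the closeness needed from the SoS step must be quantified against the bias $\eps=2^{-O(k)}$ (the paper needs $\delta\leq\eps/k$ for its majority decoder; your averaging decoder can tolerate $o(1/k)$ only because bad constraints at worst flip a $2^{-O(k)}$-size expectation, and then you need $M_i\geq \Omega(4^k\log n)$, which your $m_0$ does supply — this should be argued, not left implicit).
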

We note that in \cref{mthm:csp}, the first term in the maximum is larger unless $r = 1$. 

\cref{mthm:csp} fills the remaining gap in our knowledge for the search problem of solving random planted CSPs by giving an algorithm with ``\cite{RaghavendraRS17}-like'' guarantees, i.e., an algorithm that matches the conjectured ``correct'' trade-off between runtime and number of constraints. We note that there is a similar gap in our knowledge for the case of \emph{semirandom} CSPs, a model where the scopes $\cH$ are additionally allowed to be worst-case. For semirandom CSPs, the works of \cite{AbascalGK21,GuruswamiKM22,HsiehKM23} give a ``\cite{RaghavendraRS17}-like'' refutation algorithm, and the work of~\cite{GuruswamiHKM23} gives a $\poly(n)$-time algorithm to solve semirandom planted CSPs with $\tilde{O}(n^{k/2})$ constraints. However, no ``\cite{RaghavendraRS17}-like'' search algorithm is known, and we leave this problem as an intriguing open question for future work. As the semirandom case is only harder than the random case handled in \cref{mthm:csp}, \cref{mthm:csp} is a necessary first step towards obtaining an algorithm for the semirandom case.

\cref{mthm:csp} differs slightly from~\cite{FeldmanPV15}, which gives an algorithm to recover $x^*$ exactly (up to a global sign), rather than output a short list of candidate assignments, of which one is guaranteed to be $x^*$ and thus a satisfying one. This is because the order of quantifiers in the result of~\cite{FeldmanPV15} is different: they allow their algorithm to depend on $\cQ$, i.e., it ``knows'' the planting distribution $\cQ$ in advance. If we make the same assumption, then our algorithm does not need to output a list of candidate assignments, and can instead also recover $x^*$ exactly (up to a global sign) as well.

In \cref{mthm:csp}, we assume that $\cQ$ is a planting distribution so that it is supported only on satisfying assignments to $P$. However, if we assume that it only outputs a satisfying assignment to $P$ with probability, say, $1 - \delta$ for a constant $\delta$, so that $\val_{\Psi}(x^*) \geq 1 - \delta - o(1)$ with high probability for $\Psi \sim \Psi(x^*, m, \cQ)$,\footnote{$\val_{\Psi}(x^*)$ is $1 - \delta$ in expectation, and by a Chernoff bound is at least $1 - \delta - o(1)$ with high probability.}
then the algorithm still succeeds in outputting an assignment that is ``just as good'' as $x^*$, i.e., it has value at least $1 - \delta - o(1)$ as well.

\cref{mthm:csp} additionally gives an algorithm to invert Goldreich's PRG~\cite{Goldreich00} and its related variants~\cite{Applebaum16} in time $n^{O(\ell)}$ when the stretch of the PRG is at least $2^{O(k)}\cdot n \cdot \left(\frac{n}{\ell}\right)^{\frac{k}{2} - 1} \cdot \log n$ and $k$ is the ``locality'' of the PRG.

\parhead{Noisy $k$-XOR / $k$-sparse LPN.}
We prove \cref{mthm:csp} using the reduction of~\cite{FeldmanPV15,GuruswamiHKM23} to the special case of ``noisy $k$-XOR'', also referred to as $k$-sparse Learning Parity with Noise (LPN). This problem was first introduced in~\cite{Alekhnovich03}, and has found many applications in cryptography such as, e.g.,~\cite{ApplebaumBW10}. Below, we define the ``search'' or ``learning'' variant of the problem.

\begin{definition}[Noisy $k$-XOR / $k$-sparse LPN instance]
We define the distribution $\LPN_k(x^*, m, \eps)$ over $k$-XOR instances as follows:
\begin{enumerate}[(1)]
\item We sample $\cH$ by choosing $m$ tuples $C \in [n]^k$ with replacement;
\item For each $C \in \cH$, we set the ``right-hand side'' of the equation to be $b_C = \prod_{i \in C}x^*_i$ with probability $1/2 + \eps$ and $b_C = -\prod_{i \in C}x^*_i$ with probability $1/2 - \eps$, independently.
\end{enumerate}
We thus obtain a family of equations $\{\prod_{i\in C}x_i = b_C\}_{C\in\cH}$ over $\Fits^n$, which we refer to as $(\cH, \{b_C\}_{C\in\cH})$. We also refer to equations where $b_C = -\prod_{i \in C}x^*_i$ as \emph{corrupted} or \emph{noisy}.
\end{definition}

The key technical component of the proof of \cref{mthm:csp} is the following theorem, which gives a new algorithm to recover the planted assignment $x^*$ from a noisy $k$-XOR instance.
\begin{restatable}[Algorithm for Noisy $k$-XOR]{mtheorem}{noisykxor}
\label{mthm:noisykxor}
    Let $m,n$ be positive integers and let $1\leq k\leq o(n^{1/4})$. Let $\ell$ be a parameter such that $2k \leq \ell \leq n/8$. There is an algorithm that takes as input a $k$-XOR instance and in time $n^{O(\ell)}$ outputs an assignment $x \in \Fits^n$ with the following guarantee: if the input $(\cH, \{b_C\}_{C\in\cH})$ is drawn from $\LPN_k(x^*, m, \eps)$, where $\eps\in(0, \frac{1}{2}]$ is such that $\eps^2/k\geq\Omega(n^{-1/2})$ 
    and 
    \begin{align}
    \label{eq:mlowerboundmaintheorem}
        m\geq m_0:= 2^{O(k)}\cdot n \log n \cdot \max\left\{\frac{1}{\eps^{11}} \cdot \left(\frac{n}{\ell}\right)^{\frac{k}{2} - 1} , \frac{1}{\eps^2}\right\}\mcom
    \end{align}
     then with probability $\geq 1 - 1/\poly(n)$ over the draw of $(\cH, \{b_C\}_{C\in\cH})$, it holds that the output $x$ satisfies $x = x^*$ if $k$ is odd, and either $x = x^*$ or $x = -x^*$ if $k$ is even.
\end{restatable}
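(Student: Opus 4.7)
The plan is to follow the two-step strategy outlined in the abstract: first use a degree-$\Theta(\ell)$ Sum-of-Squares SDP to produce an $\widehat{x}\in\Fits^n$ that agrees with $x^*$ on all but an $o(1)$ fraction of coordinates, and then use a local majority vote over the XOR equations to correct the remaining mistakes and output $x^*$ exactly.

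\textbf{SDP and identifiability.} I would relax to degree-$O(\ell)$ SoS the natural optimization that maximizes the ``advantage'' $A(x) := \frac{1}{m}\sum_{C \in \cH} b_C \prod_{i \in C} x_i$ over $x \in \Fits^n$. Writing $z_i = x_i x^*_i$ and $\eta_C = b_C \prod_{i \in C} x^*_i$, so that each $\eta_C$ is an independent $\{\pm 1\}$ variable equal to $+1$ with probability $\tfrac{1}{2}+\eps$, the objective becomes $A(x) = \frac{1}{m}\sum_C \eta_C \prod_{i \in C} z_i$. Splitting $\eta_C = 2\eps + \widetilde{\eta}_C$ with $\widetilde{\eta}_C$ mean zero, I would invoke the Raghavendra--Rao--Schramm (and subsequent) refutation machinery to obtain a degree-$O(\ell)$ SoS upper bound of order $2^{O(k)}\sqrt{(n/\ell)^{k-1}\log n / m}$ on the ``noise'' part $\frac{1}{m}\sum_C \widetilde{\eta}_C \prod_{i\in C} z_i$, uniformly over $z\in\Fits^n$. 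The lower bound on $m$ in \eqref{eq:mlowerboundmaintheorem} is calibrated so that this bound is $o(\eps)$. Since $x^*$ is integrally feasible with $A(x^*) \geq 2\eps(1-o(1))$ by Chernoff, any optimal degree-$\ell$ pseudo-expectation $\widetilde{\mathbb{E}}$ must satisfy $\widetilde{\mathbb{E}}\bigl[\tfrac{1}{m}\sum_C \prod_{i \in C} z_i\bigr] \geq 1 - o(1)$. Applying the same refutation bound once more to replace this empirical $k$-tensor by its ``population'' version $\bigl(\tfrac{1}{n}\sum_i z_i\bigr)^k$ yields $\widetilde{\mathbb{E}}\bigl[\bigl(\tfrac{1}{n}\sum_i z_i\bigr)^k\bigr] \geq 1 - o(1)$, which inside the SoS proof system forces $\widetilde{\mathbb{E}}[z_i] = 1 - o(1)$ on a $1-o(1)$ fraction of coordinates.

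\textbf{Rounding and cleanup.} I round by setting $\widehat{x}_i := \mathrm{sign}\bigl(\widetilde{\mathbb{E}}[x_i]\bigr)$ after fixing a global sign by a reference coordinate; when $k$ is even I output both $\widehat{x}$ and $-\widehat{x}$ to handle the $\pm x^*$ ambiguity. The identifiability statement then gives $\langle \widehat{x}, x^* \rangle \geq (1-o(1))\, n$, i.e.\ $\widehat{x}$ disagrees with $x^*$ on a set $S \subseteq [n]$ of size $o(n)$. To correct $S$, for each $i \in [n]$ I iterate over equations $C \in \cH$ containing $i$ and compute the ``vote'' $v_C := b_C \prod_{j\in C\setminus\{i\}} \widehat{x}_j$ for $x^*_i$. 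Each $v_C$ is correct iff $C$ is uncorrupted and $C \cap S \subseteq \{i\}$, which occurs with probability at least $\tfrac{1}{2}+\tfrac{\eps}{2}$ since $|S|=o(n)$ and scopes are uniform. The bound on $m$ ensures each $i$ appears in $\widetilde{\Omega}((\log n)/\eps^2)$ equations, so a per-coordinate majority vote plus a union bound over $i \in [n]$ recovers $x^*$ exactly with probability $1 - 1/\poly(n)$.

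\textbf{Main obstacle.} The technical heart of the proof lies in the identifiability step, where one must convert a degree-$\ell$ SoS upper bound on the noise term into the coordinate-wise lower bound $\widetilde{\mathbb{E}}[z_i] \approx 1$, and carefully track how the various $\eps$ losses (the Chernoff gap at $x^*$, the signal/noise decomposition, the refutation of the empirical-vs-population tensor difference, and the SoS passage from $\bigl(\tfrac{1}{n}\sum_i z_i\bigr)^k$ to coordinate-wise statistics) compose to give precisely the $1/\eps^{11}$ factor in \eqref{eq:mlowerboundmaintheorem}. The SDP setup, the Chernoff computation for $A(x^*)$, and the majority-vote cleanup are each essentially standard bookkeeping.
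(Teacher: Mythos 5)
Your overall two-step plan (degree-$O(\ell)$ SoS for an $o(1)$-close $\widehat{x}$, then per-coordinate majority voting to finish) is the same as the paper's, but two steps as written would fail. The most serious is the identifiability-to-rounding step for even $k$: it is simply not true that $\pE\bigl[\bigl(\tfrac1n\sum_i z_i\bigr)^k\bigr]\geq 1-o(1)$ ``forces $\pE[z_i]=1-o(1)$ on most coordinates.'' The uniform mixture of $x^*$ and $-x^*$ is an honest distribution (hence a valid pseudo-expectation) achieving value $1$ for even $k$ while every first moment $\pE[x_i]$ is exactly $0$; so $\sgn(\pE[x_i])$ carries no information, and neither fixing a global sign by a reference coordinate nor outputting both $\pm\widehat{x}$ rescues this. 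The paper has to work with the second-moment matrix $\pE[x^{\otimes 2}]$, show that all but $0.01n$ of its rows are $(1-O(\delta))$-correlated with $x^*$, and then select a good row by comparing pairwise correlations among the sign-rounded rows (Steps (2ii)--(2iii) of its approximate-recovery algorithm). Even in the odd case, passing from $\pE[\ip{x,x^*}^k]\geq n^k(1-\delta)$ to $\pE[\ip{x,x^*}]\geq n(1-2\delta)$ is a nontrivial SoS fact (a generalization of Hopkins--Shi--Steurer, proved in the paper via explicit univariate SoS certificates for $1-2(u/n)^k+u/n\geq 0$ on $[-n,n]$), not something that follows automatically ``inside the proof system.''

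The second gap is in the cleanup: your vote-correctness probability ($\geq\tfrac12+\tfrac\eps2$ per constraint) treats the error set $S$ of $\widehat{x}$ as independent of the scopes and of the noise, but $\widehat{x}$ is computed from the very same $(\cH,\{b_C\})$ you vote with, so the events ``$C$ uncorrupted'' and ``$C\cap S\subseteq\{i\}$'' are correlated with $\widehat{x}$ and the Chernoff argument does not apply as stated. The paper avoids this by splitting $\cH$ into two halves, running SoS on the first and voting only with the second (fresh) half, and its rounding lemma explicitly requires $\widehat{x}$ to be independent of the voting constraints. Two smaller points: your decomposition $\eta_C=2\eps+\widetilde\eta_C$ yields centered but non-Rademacher noise, whereas the cited refutation theorem is stated for uniform signs (the paper instead resamples $b_C$ as ``$=x^*_C$ with probability $2\eps$, else Rademacher,'' so the noise part genuinely has uniform signs); and ``applying the same refutation bound once more'' to replace the empirical tensor $\E_{C\sim\cH}[z_C]$ by $\bigl(\tfrac1n\sum_i z_i\bigr)^k$ is not a direct application---there are no random signs in that difference, and the paper needs a symmetrization argument against an independent copy $\cH'$ plus a McDiarmid concentration step (its local-to-global lemma) to make this legitimate.
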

We note that as in \cref{mthm:csp}, the first term in \cref{eq:mlowerboundmaintheorem} is larger unless $k = 1$.

\cref{mthm:noisykxor} gives a new algorithm for the search (a.k.a.\ ``learning'') variant of $k$-sparse LPN problem in the ``high noise regime'' where the noise is $1/2 - \eps$. To the best of our knowledge, prior algorithms in this high noise regime were either for the distinguishing variant of the problem~\cite{AllenOW15,RaghavendraRS17,GuruswamiKM22,HsiehKM23}, the search problem in the polynomial-time regime~\cite{FeldmanPV15,GuruswamiHKM23}, or are far from obtaining the conjectured optimal ``runtime vs.\ constraints'' trade-off~\cite{ChenSZ25}.

Unlike in \cref{mthm:csp}, in \cref{mthm:noisykxor} our algorithm recovers the planted assignment $x^*$ \emph{exactly}: when $k$ is odd, it outputs $x^*$, and when $k$ is even, it outputs one of $\pm x^*$, which is the best possible since $\LPN_k(x^*, m, \eps)$ and $\LPN_k(-x^*, m, \eps)$ are identical distributions. The reason we do not recover $x^*$ exactly in \cref{mthm:csp} is because the reduction from the general CSP case in \cref{mthm:csp} to the noisy $k$-XOR case in \cref{mthm:noisykxor} produces $2^{k+1} - 2$ candidate noisy $t$-XOR instances for $1 \leq t \leq k$, and then each noisy $t$-XOR instance outputs at most $2$ candidates for $x^*$ because of the global sign. If we use the convention of~\cite{FeldmanPV15} and allow the algorithm in \cref{mthm:csp} to ``know'' the planting distribution $\cQ$ in advance, then it ``knows'' which noisy $t$-XOR instance it should use to recover $x^*$, which allows it to recover $x^*$ exactly (up to a global sign) as done in~\cite{FeldmanPV15}.

\section{Proof Overview}
\label{sec:proofoverview}
In this section, we give an overview of the proofs of our two main theorems, \cref{mthm:csp,mthm:noisykxor}. As \cref{mthm:csp} follows from \cref{mthm:noisykxor} combined with the standard reduction from case of general CSPs to $k$-XOR~\cite{FeldmanPV15,GuruswamiHKM23}, we will focus on \cref{mthm:noisykxor} in this section, which is the case of noisy $k$-XOR.

To begin, in \cref{sec:GHKM23} we will briefly recall the overall approach of~\cite{GuruswamiHKM23} and explain why it fails to generalize to the case of super-constant $\ell$. Then, in \Cref{sec:overview-rounding} we will explain our approach and how it avoids the previous technical barrier.

\subsection{Certificates of uniqueness: the algorithm of \texorpdfstring{\cite{GuruswamiHKM23}}{GHKM23}}
\label{sec:GHKM23}

The key idea in the algorithm of~\cite{GuruswamiHKM23} is to understand conditions under which a simple SDP is able to recover the planted assignment $x^*$ exactly. For simplicity, let us briefly explain how this works for the case of $2$-XOR, before explaining how this generalizes to larger $k$.

For the case of noisy $2$-XOR, we are given equations of the form $x_i x_j = b_{ij}$, where $b_{ij}$ is equal to $x^*_i x^*_j$ with probability $1/2 + \eps$. We can represent the pairs $\{i,j\}$ for which we are given equations as a graph $G$, and we can view it as having a signed adjacency matrix $A$, where $A_{i,j} = b_{ij}$ if there is an equation for the pair $\{i,j\}$, and otherwise $A_{i,j} = 0$. The argument of~\cite{GuruswamiHKM23} shows that if the (unsigned) graph $G$ is an expander with \emph{spectral gap} $\lambda$\footnote{That is, $\lambda$ is the second smallest eigenvalue of the normalized Laplacian of $G$, where the smallest eigenvalue of the normalized Laplacian is always $0$.} and $m \geq n \log n \cdot \poly(1/\lambda)$, then with high probability over the noise, the basic SDP is able to \emph{certify} that $x^*$ is (up to a sign) the unique globally optimal assignment. This implies that the basic SDP, which computes the matrix $X \in \R^{n \times n}$ that maximizes $\sum_{\{i,j\} \in G} X_{ij} b_{ij}$ subject to the conditions $X \succeq 0$ and the diagonal of $X$ is all $1$'s, has an optimal solution of $X = x^* (x^*)^{\top}$, and from this one can easily recover $x^*$ up to a sign. This is the best possible when $k$ is even, as the sign of $x^*$ does not matter. Finally, as $G$ is random in the case of noisy $2$-XOR, it is also an expander, and this finishes the proof.

If $k$ is even but larger than $2$, a similar idea still works. One can form the matrix $A$ indexed by sets $S$ of size $k/2$, where $A(S,T) = b_{S \cup T}$ if $S \cup T = C \in \cH$ is a constraint with right-hand side $b_{C}$. Since the hypergraph $\cH$ is random, this underlying graph is again an expander, and so a similar proof then shows that if $m \geq \tilde{O}(N) = \tilde{O}(n^{k/2})$ where $N = \binom{n}{k/2}$ is the number of vertices in the new graph, then one can recover $\prod_{i \in C} x^*_i$ for each $C \subseteq [n]$ of size $k$. From this, one immediately can recover $x^*$ up to a global sign as well.

\parhead{The barrier to the approach of~\cite{GuruswamiHKM23}: weak expansion of Kikuchi graphs.} The above argument shows how to give an algorithm for the ``polynomial-time case'' of $\ell = O(1)$. Let us now explain why this approach fails when $\ell$ becomes super-constant. For general $\ell$, the natural graph to consider is the ``level $\ell$ Kikuchi graph'' of~\cite{WeinAM19}, which is defined below.
\begin{definition}
\label{def:evenkikuchi}
Let $k/2 \leq \ell \leq n$ be an integer, and let $\cH$ be a collection of subsets of $n$ of size exactly $k$. The Kikuchi graph $G(\cH, \ell)$ is defined as follows. The vertices of the graph are the subsets $S \subseteq [n]$ of size exactly $\ell$, and there is an edge $(S,T)$ if the symmetric difference of $S$ and $T$ is equal to some $C \in \cH$.
\end{definition}
The Kikuchi graph in \cref{def:evenkikuchi} was originally introduced by~\cite{WeinAM19} to give a simplified proof of the results of~\cite{RaghavendraRS17}, and has since been very influential in extending algorithms for average-case $k$-XOR instances to the superpolynomial-time and semirandom settings (see~\cite{GuruswamiKM22,HsiehKM23}).
In light of this, one might expect to be able to simply use this graph to extend the algorithm of~\cite{GuruswamiHKM23} to prove \cref{mthm:noisykxor}. However, there is an immediate issue one encounters, which is that the Kikuchi graph is \emph{not} a good expander when $\ell$ is super-constant. Even if we let $\cH$ be the ``complete'' hypergraph, i.e., it contains all sets $C$ of size $k$, then the Kikuchi graph is simply a Johnson graph. The eigenvalues and eigenvectors of the Johnson graph are well-understood (see~\cite[Lemma~A.3]{WeinAM19}), and in this setting of parameters we will have spectral gap $\lambda \sim 1/\ell$. Now, because the constraint threshold of~\cite{GuruswamiHKM23} loses a $\poly(1/\lambda)$ factor, this means that if we rely on expansion here, the resulting constraint threshold we obtain will be $m_0 \cdot \poly(\ell)$ instead of $m_0$, where $m_0 = \tilde{O}(n) \left(\frac{n}{\ell}\right)^{\frac{k}{2} - 1}$ is the ``correct'' threshold. When $\ell = O(1)$ as in the case of~\cite{GuruswamiHKM23}, this extra $\poly(\ell)$ factor is irrelevant and one obtains a threshold of $\tilde{O}(n^{k/2})$, but in the more general setting of \cref{mthm:noisykxor}, this is a significant loss. Furthermore, we do not see a way to overcome this barrier, as the complete hypergraph is in some sense the ``best case scenario'' for the algorithm.

\subsection{A two part rounding procedure: our approach}
\label{sec:overview-rounding}

The  problem in the aforementioned approach arises because we are trying to find an algorithm that recovers $x^*$ in ``one shot'' by certifying that $x^*$ is the globally optimal assignment. Instead, to prove \cref{mthm:noisykxor}, we will depart from this framework and find $x^*$ in two steps. In the first step, we will be content with finding an assignment $\widehat{x}$ that is $o(1)$-close to $x^*$ in normalized Hamming distance. Then, we will explain how, given $\widehat{x}$, we can then round it to recover $x^*$ exactly.

\parhead{Finding an approximate solution.} Our algorithm for finding an approximate solution uses the Sum-of-Squares hierarchy; we refer the reader to \cref{sec:sos} for the relevant background. As notation, we will let $\cH$ denote the hypergraph of the noisy $k$-XOR instance. We will think of each right-hand side $b_C$ as being sampled via a two step process. First, with probability $2 \eps$, we set $b_C = x^*_C$, and otherwise we set $b_C = \sigma_C$, where $\sigma_C$ is a Rademacher random variable, i.e., it is uniform over $\Fits$. We let $\cH_{2 \eps}$ denote the set of $C$'s where the sampling procedure sets $b_C = x^*_C$ in the first step, so that for $C \in \cH \setminus \cH_{2 \eps}$, we have $b_C = \sigma_C$ where $\sigma_C$ is a Rademacher random variable.

In our proof, we argue the following. Suppose that with high probability over the draw of the noisy $k$-XOR instance, there are degree-$O(\ell)$ Sum-of-Squares proofs that certify:
\begin{enumerate}[(1)]
\item (\cref{lem:localtoglobal}) For all $x \in \Fits^n$, $\abs{\E_{C \sim \cH_{2 \eps}}[\prod_{i \in C} x_i] - \frac{1}{n^k} (\sum_{i = 1}^n x_i)^k} \leq o(1)$ ;
\item (\cref{fact:hkmrefute}) For all $x \in \Fits^n$, $\abs{ \E_{C \sim \cH \setminus \cH_{2 \eps}}[\sigma_C \prod_{i \in C} x_i]} \leq o(\eps)$, where the $\sigma_C$'s are Rademacher random variables.
\end{enumerate}
Assuming that degree-$O(\ell)$ SoS can certify the above two inequalities, we can complete the proof. By a Chernoff bound, we have that $\abs{\cH_{2 \eps}} \approx 2 \eps m$, where $m = \abs{\cH}$.
Let $\pE_{\mu}$ denote the degree-$O(\ell)$ pseudoexpectation we find by maximizing the polynomial $\psi(x) =  \sum_{C \in \cH} b_C \prod_{i \in C} x_i$ subject to the constraints $x_i^2 = 1$ for all $i \in [n]$. We have $\psi(x^*) \geq 2 \eps (1- o(1))m$ (in expectation over the noisy $k$-XOR instance, $\psi(x^*) = 2 \eps m$, and then it is $2 \eps (1 - o(1)) m$ with high probability), and so we must have $\pE_{\mu}[\psi(x)] \geq 2 \eps(1 - o(1))m$. We can then split $\psi(x)$ as
\begin{equation*}
   \psi(x) = \underbrace{\sum_{C\in\cH_{2\eps}}b_C \prod_{i \in C} x_i}_{(\operatorname{I})} + \underbrace{\sum_{C\in\cH\setminus\cH_{2\eps}} b_C \prod_{i \in C} x_i}_{(\operatorname{II})} \mper
\end{equation*}
Let $\psi_1(x)$ denote the first term and $\psi_2(x)$ denote the second term. For each $C \in \cH \setminus \cH_{2 \eps}$, we have $b_C = \sigma_C$. Therefore, by Item (2), degree-$O(\ell)$ SoS can certify that $\psi_2(x) \leq o(\eps m)$ for all $x \in \Fits^n$, which implies that $\pE_{\mu}[\psi_2(x)] \leq o(\eps m)$. For the first term, we have that $b_C = \prod_{i \in C} x^*_i$ for all $C \in \cH_{2 \eps}$, and so by Item (1), it follows that $\abs{\pE_{\mu}[\frac{1}{2 \eps m}\psi_1(x)] - \pE_{\mu}[\frac{1}{n^k} (\sum_{i = 1}^n x_i x^*_i)^k]} \leq o(1)$.

Combining, we have shown that
\begin{flalign*}
\pE_{\mu}\left[\frac{1}{n^k} \ip{x,x^*}^k\right] &\geq \frac{1}{2 \eps m} \pE_{\mu}[\psi_1(x)] - o(1) = \frac{1}{2 \eps m} \pE_{\mu}[\psi(x) - \psi_2(x)] - o(1)  \\
&\geq  \frac{1}{2 \eps m} \left(2 \eps(1 - o(1))m - o(\eps m)  \right) - o(1) \geq 1 - o(1) \mper
\end{flalign*}
Hence, $\pE_{\mu}[\frac{1}{n^k} \ip{x,x^*}^k]$ is very close to $1$. From this, we would like to recover an assignment $\widehat{x}$ that is $o(1)$-close in relative Hamming distance to $x^*$. In the case of odd $k$, we do this by proving a generalization of~\cite[Lemma~A.5]{HopkinsSS15} (see \cref{fact:highcorrsos}), which shows that we can round by taking $\widehat{x}_i = \sgn(\pE_{\mu}[x_i])$. The case of even $k$ is slightly more complicated, as $\pE_{\mu}[x_i]$ may be $0$ for all $i$ because $x^*$ and $-x^*$ are equally valid planted assignments. We show that almost all of the rows of the matrix $\pE_{\mu}[x^{\otimes 2}] \in \R^{n \times n}$ must be highly correlated with $x^*$ (see \cref{lem:approxrecovery}), and from this we can extract a single $\widehat{x} \in \Fits^n$ that is $o(1)$-close to one of $x^*$, $-x^*$.

It remains to justify why degree-$O(\ell)$ SoS can certify Items (1) and (2) above. Item (2) is equivalent to requiring that the algorithm of~\cite{RaghavendraRS17,WeinAM19,GuruswamiKM22,HsiehKM23} for refuting random $k$-XOR can be ``captured'' by degree-$O(\ell)$ SoS, which is indeed true (see \cref{fact:hkmrefute}). Item (1) is a bit tricker to argue, but it turns out that we can reduce Item (1) to the case of Item (2) by using a symmetrization argument: we can replace $\frac{1}{n^k} (\sum_{i = 1}^n x_i)^k$ with $\E_{C \in \cH'}[ \prod_{i \in C} x_i]$, where $\cH'$ is a random hypergraph with the same distribution as $\cH_{2 \eps}$ and then view the random process of placing $C$ in $\cH_{2\eps}$ or $\cH'$ as assigning a Rademacher random variable $\sigma_C$ to each $C$. Item (2) is captured by \cref{lem:localtoglobal}, which we prove in \cref{sec:localtoglobal}.

As we can solve a degree-$O(\ell)$ SoS program in time $n^{O(\ell)}$, it follows that in $n^{O(\ell)}$ time we can approximately recover $x^*$ by finding an assignment $\widehat{x} \in \Fits^n$ that is $o(1)$-close to $x^*$.

\parhead{Rounding an approximate solution to an exact solution.} It remains to show how to round $\widehat{x}$ to exactly recover $x^*$. We will do this by using $\geq n\log n$ ``fresh'' constraints. That is, we divide the original $m$ constraints into two sets, and we use the first set to obtain $\widehat{x}$, and the second set to round $\widehat{x}$ to get $x^*$.

By ``shifting'' all equations by $\widehat{x}$, we can equivalently view this rounding task as recovering a planted assignment $y$ from a noisy $k$-XOR instance with $n \log n$  constraints under the assumption that $y$ has at most $o(n)$ entries that are $-1$. This is because $\widehat{x} \odot x^*$, where $x^*$ is the original assignment and $\odot$ denotes the element-wise product of vectors, has at most $o(n)$ entries that are $-1$.

Let us now look at all constraints that contain some fixed $i \in [n]$. By concentration, we should expect to have about $\log n$ of these constraints for each choice of $i$, and moreover each constraint, after removing the chosen index $i$, is distributed as a random set of size $k-1$. We can now make the following observation. Because $y$ has only $o(n)$ entries that are $-1$, a random set of size $k-1$ should typically avoid all $j$'s where $y_j = -1$. Therefore, a typical constraint $C \cup \{i\}$ where $\abs{C} = k-1$ will have $\prod_{j \in C} y_j = 1$, and so it will have $b_C = y_i \prod_{j \in C} y_j= y_i$ when the constraint is not noisy. Because all but a $o(1)$-fraction of such constraints are typical, by taking a majority vote we can recover each $y_i$ exactly with high probability, and therefore recover the planted assignment $y$.

The idea underpinning the above rounding procedure is simple: if $y$ is a very biased assignment, then a typical constraint contains only indices $j$ where $y_j = 1$, and so if we look at all constraints containing a certain index $i$, then the right-hand sides of these equations will be biased towards $y_i$, and this is simple to detect. We note that a similar idea was employed in the work of~\cite{BogdanovQ09}, which gave an algorithm to break Goldreich's PRG under the condition that the predicate has ``distribution complexity'' $r \leq 2$ (\cref{def:distcomp}).

\section{Preliminaries}
\label{sec:prelims}
For any $x\in\R^n$ and any $S\seq[n]$, define $x_S:= \prod_{i\in S}x_i$.
For any two vectors $x,y \in \R^n$, we define the correlation of $x, y$ to be $\corr(x, y):= \ip{x,y}/\norm{x}_2 \norm{y}_2$.

For any $z\in\R$, define 
\begin{equation*}
\sgn(z):= \begin{cases}
    1 & \text{if }z\geq 0\mcom\\
    -1 & \text{otherwise}\mper
\end{cases}
\end{equation*}
For any vector $x\in\R^n$, define $\sgn(x)\in \{\pm 1\}^n$ to be the entrywise application of $\sgn(\cdot)$ to $x$.

The following proposition shows that if $x \in \R^n$ is correlated with $x^* \in \Fits^n$, then $\sgn(x)$ is also correlated with $x^*$.
\begin{proposition}
\label{prop:realtoboolean}
    Let $x\in\R^n$ be a non-zero vector, and let $x^*\in\Fits^n$ be such that $\corr(x, x^*)\geq 1-\delta$. Then $\corr(\sgn(x), x^*)\geq 1 - 4\delta$.
\end{proposition}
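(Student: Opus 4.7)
The key observation is that the hypothesis and conclusion are both scale-invariant in $x$: replacing $x$ with $\alpha x$ for any $\alpha > 0$ leaves $\corr(x, x^*)$ unchanged, and also leaves $\sgn(x)$ unchanged. So the plan is first to normalize $\|x\|_2 = \sqrt{n} = \|x^*\|_2$, and then reduce the problem to comparing $x$ and $x^*$ in Euclidean distance.

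Under this normalization, the hypothesis $\corr(x,x^*) \geq 1-\delta$ becomes $\langle x, x^*\rangle \geq (1-\delta) n$, which gives the squared-distance bound
\[
\|x - x^*\|_2^2 \;=\; \|x\|_2^2 + \|x^*\|_2^2 - 2\langle x, x^*\rangle \;\leq\; 2n - 2(1-\delta) n \;=\; 2\delta n.
\]
Next I would introduce the ``sign-error'' set $B := \{i \in [n] : \sgn(x_i) \neq x^*_i\}$, so that $\langle \sgn(x), x^*\rangle = n - 2|B|$ and thus $\corr(\sgn(x), x^*) = 1 - 2|B|/n$. It therefore suffices to show $|B| \leq 2\delta n$.

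To bound $|B|$, I would argue that each bad coordinate $i \in B$ contributes at least $1$ to $\|x-x^*\|_2^2$. Indeed, by the definition of $\sgn(\cdot)$, if $i \in B$ then either $x^*_i = 1$ and $x_i < 0$, or $x^*_i = -1$ and $x_i \geq 0$; in either case $x_i x^*_i \leq 0$, so
\[
(x_i - x^*_i)^2 \;=\; x_i^2 - 2 x_i x^*_i + 1 \;\geq\; 1.
\]
Summing over $i \in B$ gives $\|x - x^*\|_2^2 \geq |B|$. Combined with the earlier bound $\|x - x^*\|_2^2 \leq 2\delta n$, this yields $|B| \leq 2\delta n$, and hence $\corr(\sgn(x), x^*) = 1 - 2|B|/n \geq 1 - 4\delta$, completing the proof.

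There is no real obstacle here; the one thing to be careful about is justifying the initial rescaling, since the statement is given for an arbitrary nonzero $x$ rather than a unit vector. The rescaling is legitimate precisely because both $\sgn(\cdot)$ and $\corr(\cdot,\cdot)$ are invariant under positive scaling, and this invariance is what lets the Euclidean-distance argument go through cleanly.
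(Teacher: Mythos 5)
Your proof is correct, and it reaches the key bound $|B|\leq 2\delta n$ by a genuinely different (though equally elementary) mechanism than the paper. The paper normalizes so that $x^*=\1$ and $\|x\|_2=1$, and then applies Cauchy--Schwarz to the sum of the nonnegative coordinates: from $\sum_{i:x_i\geq 0}x_i\geq(1-\delta)\sqrt{n}$ and $\bigl(\sum_{i:x_i\geq 0}x_i\bigr)^2\leq(n-|E|)\sum_{i:x_i\geq 0}x_i^2\leq n-|E|$ it deduces $|E|\leq\bigl(1-(1-\delta)^2\bigr)n\leq 2\delta n$, where $E$ is the set of coordinates with the wrong sign. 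You instead normalize $\|x\|_2=\sqrt{n}$, convert the correlation hypothesis into the distance bound $\|x-x^*\|_2^2\leq 2\delta n$, and observe that each sign-disagreement coordinate contributes at least $1$ to the squared distance because $x_ix^*_i\leq 0$ there (your case check against the paper's convention $\sgn(0)=+1$ is correct). Both routes are a few lines; yours is arguably more transparent and makes the geometric content (closeness in $\ell_2$ forces few sign errors) explicit, while the paper's avoids introducing the distance at the cost of a Cauchy--Schwarz step and a small algebraic estimate $1-(1-\delta)^2\leq 2\delta$. Your remark about the legitimacy of rescaling, since both $\corr(\cdot,x^*)$ and $\sgn(\cdot)$ are invariant under positive scaling, is exactly the right justification and mirrors the paper's implicit ``without loss of generality'' normalization.
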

\begin{proof}
Without loss of generality assume $x^* = \1$, and $x$ is a unit vector. Let $E:= \{i\in[n]:x_i < 0\}$. Since $\norm{x^*}_2 = \sqrt{n}$, we have that  
\begin{equation*}\sqrt{n}(1-\delta)\leq\left\langle x, x^*\right\rangle = \sum_{i: x_i\geq 0}x_i + \sum_{i: x_i < 0}x_i\leq\sum_{i: x_i\geq 0}x_i\implies\sum_{i: x_i\geq 0}x_i\geq(1-\delta)\sqrt{n}\mper\end{equation*}
At the same time, by Cauchy-Schwarz inequality, $\left(\sum_{i:x_i\geq 0}x_i\right)^2\leq(n-|E|)\cdot\sum_{i:x_i\geq 0}x^2_i\leq n - |E|$. Thus $n - |E|\geq (1 - \delta)^2n\implies |E|\leq (1 - (1 - \delta)^2)n\leq 2\delta n$. Consequently, $\corr(\sgn(x), x^*) = \frac{1}{n}\left(\sum_{i: x_i\geq 0}1-\sum_{i:x_i < 0}1\right) = 1-\frac{2|E|}{n}\geq 1 - 4\delta$, as desired.
\end{proof}

We define a $k$-uniform hypergraph to be a collection of \emph{tuples}  of size $k$. We also allow our hypergraphs to have repeated hyperedges, i.e., they can be multisets.

At times, it will be convenient to have notation to denote ordered tuples of length $k$ where the entries in the tuple are distinct. Towards this end, for any set $V$, we define $V^{(k)}:= \{x\in V^k: \text{All entries of }x\text{ are distinct}\}$. That is, $[n]^k$ is the set of tuples of length $k$, and $[n]^{(k)}$ is the set of tuples of length $k$ where all entries are distinct.

We state the following result to pass from tuples in $[n]^k$ to tuples with distinct entries:
\begin{proposition}
\label{prop:hypergraphclean}
    Let $k\geq 2$. Then $|[n]^k\setminus[n]^{(k)}|/n^k\leq k^2/n$. In particular, if $\cH$ is a collection of $m$ u.a.r.\ samples from $[n]^k$, then $|\cH\setminus[n]^{(k)}|/|\cH|\leq 2k^2/n$ with probability $\geq 1 - \exp(-\Omega(mk^2/n))$.
\end{proposition}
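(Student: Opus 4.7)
The plan is to establish the two parts of the proposition independently: the first is a deterministic counting bound, and the second is a concentration statement that follows by plugging the first bound into a standard Chernoff inequality.

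For the first part, I would bound the ``bad'' set $[n]^k\setminus[n]^{(k)}$ by a union bound over pairs of coordinates. A tuple $(x_1,\dots,x_k)\in[n]^k$ lies outside $[n]^{(k)}$ precisely when there exist $1\leq i<j\leq k$ with $x_i=x_j$. For any fixed such pair, the number of tuples satisfying $x_i=x_j$ is $n\cdot n^{k-2}=n^{k-1}$ (choose the common value and the remaining $k-2$ coordinates freely). Summing over the $\binom{k}{2}$ choices of pair and dividing by $n^k$ gives
\begin{equation*}
\frac{|[n]^k\setminus[n]^{(k)}|}{n^k}\;\leq\;\frac{\binom{k}{2}n^{k-1}}{n^k}\;=\;\frac{\binom{k}{2}}{n}\;\leq\;\frac{k^2}{n}\mper
\end{equation*}
This step is essentially routine and not the bottleneck.

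For the second part, each of the $m$ i.i.d.\ uniform samples from $[n]^k$ lies in $[n]^k\setminus[n]^{(k)}$ independently with some probability $p\leq k^2/n$ by part one. Let $X=|\cH\setminus[n]^{(k)}|$, so $X$ is a sum of $m$ independent Bernoulli$(p)$ random variables with mean $\mu=pm\leq k^2 m/n$. I would then apply a multiplicative Chernoff bound to the upper tail: for a Bernoulli sum with mean at most $\mu_0:= k^2m/n$, we have $\Pr[X\geq 2\mu_0]\leq\exp(-\Omega(\mu_0))=\exp(-\Omega(mk^2/n))$. Rewriting the event $X\geq 2\mu_0$ as $X/m\geq 2k^2/n$ yields exactly the claimed high-probability bound.

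The main (minor) subtlety to watch for is that the true Bernoulli parameter $p$ may be strictly smaller than $k^2/n$, so I want a form of Chernoff that bounds $\Pr[X\geq t]$ directly in terms of $t$ (or in terms of an upper bound on the mean) rather than in terms of the exact mean; the standard multiplicative form $\Pr[X\geq(1+\delta)\mu]\leq\exp(-\delta^2\mu/3)$ applied with $\mu$ replaced by any valid upper bound suffices, and the $\delta=1$ case gives the stated exponent up to the $\Omega(\cdot)$. Apart from this bookkeeping, there is no real obstacle.
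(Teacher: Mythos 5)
Your proposal is correct and follows essentially the same route as the paper: a deterministic bound on the fraction of tuples with a repeated coordinate, followed by a Chernoff bound on the number of such samples in $\cH$. The only cosmetic difference is in the counting step — you union bound over the $\binom{k}{2}$ coordinate pairs, while the paper lower-bounds $|[n]^{(k)}|$ via the falling factorial $n(n-1)\cdots(n-k+1)\geq n^k(1-k^2/n)$ — and your remark about applying the Chernoff bound with an upper bound on the mean (rather than the exact mean) is valid bookkeeping that the paper leaves implicit.
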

\begin{proof}
    Note that $|[n]^{(k)}| = n(n - 1)\cdots(n - k + 1) \geq n^k\left(1 - \frac{k - 1}{n}\right)^k\geq n^k\left(1 - \frac{k^2}{n}\right)$. Consequently, $\E|\cH\setminus[n]^{(k)}|\leq mk^2/n$, and by Chernoff, with probability $\geq 1 - \exp(-\Omega(mk^2/n))$, we have that $|\cH\setminus[n]^{(k)}|\leq 2mk^2/n$, as desired.
\end{proof}
We generate a random hypergraph with $m$ hyperedges on the vertex set $V$ by sampling $m$ hyperedges from $V^k$ uniformly at random.

We will also need the following concentration inequality.

\begin{fact}[McDiarmid's Inequality, \cite{McDiarmid89}]
 \label{fact:mcdiarmid}
    Let $f:\Omega^m\to\R$ be a function such that for any $i\in[m]$, and any $\omega_1, \ldots, \omega_m\in \Omega, \omega'_i\in \Omega$, we have that $|f(\omega_1, \ldots, \omega_i, \ldots, \omega_m) - f(\omega_1, \ldots, \omega'_i, \ldots, \omega_m)|\leq c_i$ for some real numbers $c_1, \ldots, c_m\geq 0$. Then if $X_1, \ldots, X_m\sim\Omega$ are independent, then 
    \begin{equation*}\Pr(|f(X_1, \ldots, X_m) - \E[f(X_1, \ldots, X_m)]|\geq\eps)\leq 2\exp\left(-\frac{2\eps^2}{\sum_{i\in[m]}c_i^2}\right)\mper\end{equation*}
\end{fact}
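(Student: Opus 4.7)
The plan is to prove McDiarmid's inequality via the classical Doob-martingale construction combined with a sharp Azuma-Hoeffding style moment generating function argument. The bounded-difference hypothesis on $f$ naturally converts into a martingale whose increments are conditionally supported on intervals of length $c_i$, and chaining the conditional MGF bound across the $m$ coordinates then yields the desired sub-Gaussian tail.

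First I would set $Y_i := \E[f(X_1,\dots,X_m) \mid X_1,\dots,X_i]$ for $0 \le i \le m$, so that $Y_0 = \E[f]$ and $Y_m = f(X_1,\dots,X_m)$, and the sequence is a martingale with respect to the natural filtration $\sigma(X_1,\dots,X_i)$. The main structural step is to show that, conditionally on $X_1,\dots,X_{i-1}$, the increment $D_i := Y_i - Y_{i-1}$ lies almost surely in an interval of length at most $c_i$. To see this, fix values $X_1=x_1,\dots,X_{i-1}=x_{i-1}$ and define $g(\omega) := \E_{X_{i+1},\dots,X_m}[f(x_1,\dots,x_{i-1},\omega,X_{i+1},\dots,X_m)]$. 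The bounded-difference hypothesis gives $|g(\omega)-g(\omega')| \le c_i$ for any $\omega,\omega' \in \Omega$ since the pointwise bound on $f$ survives the averaging over $X_{i+1},\dots,X_m$. Thus $g(X_i)$ takes values in an interval of length $c_i$, and since $Y_{i-1} = \E_{X_i}[g(X_i)]$ lies in that same interval, so does $D_i$ after recentering.

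Next I would invoke the sharp form of Hoeffding's lemma: any random variable with mean zero supported on an interval of length $\ell$ has MGF bounded by $\exp(\lambda^2\ell^2/8)$. Applying this conditionally to each $D_i$ and chaining via the tower property yields $\E[\exp(\lambda(Y_m - Y_0))] \le \exp\!\left(\lambda^2 \sum_i c_i^2 / 8\right)$. A standard Chernoff bound with the optimal choice $\lambda = 4\eps / \sum_i c_i^2$ then produces the one-sided tail $\Pr(Y_m - Y_0 \ge \eps) \le \exp(-2\eps^2/\sum_i c_i^2)$, and applying the same argument to $-f$ and union-bounding gives the claimed two-sided bound with leading factor $2$.

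The one subtle point I expect to have to handle carefully is obtaining the sharp constant $2$ in the exponent rather than the weaker $\tfrac{1}{2}$ one would get from a naive $|D_i|\le c_i$ application of Azuma's inequality. This is exactly why the second step must produce an interval of length $c_i$ for $D_i$, not merely an absolute magnitude bound; the sharper Hoeffding MGF estimate exploits the interval length rather than the maximum magnitude. Once the interval property is in place, the MGF chaining via the tower property and the final Chernoff optimization are entirely routine.
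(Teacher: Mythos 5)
Your proof is correct. The paper does not prove this statement at all---it is quoted as a black-box fact with a citation to McDiarmid's original work---so there is no in-paper argument to compare against; your Doob-martingale construction, with the conditional interval-length form of Hoeffding's lemma (rather than the weaker $|D_i|\leq c_i$ Azuma bound) chained via the tower property and optimized by Chernoff, is exactly the classical proof from the cited reference and correctly attains the constant $2$ in the exponent.
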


\subsection{Background on the Sum-of-Squares hierarchy}
\label{sec:sos}

We recall some basic facts about SoS (see \cite{FlemingKP19, BarakS16} for further details). Define $\R[x_1, \ldots, x_n]_{\leq t}$ to be the set of polynomials in $\R[x_1, \ldots, x_n]$ of degree $\leq t$.
\begin{definition}[Pseudo-expectations over the hypercube]
\label{sosaxioms}
    For any $d\geq 2$, a degree $d$ pseudo-expectation $\pE$ over $\Fits^n$ is a linear functional $\pE: \R[x_1, \ldots, x_n]_{\leq d}\to\R$ satisfying the following properties:
    \begin{enumerate}
        \item (Normalization) $\pE[1] = 1$, 
        \item (Booleanity) $\pE[fx_i^2] = \pE[f]$ for all $i\in[n], f\in\R[x_1, \ldots, x_n]_{\leq d - 2}$, 
        \item (Positivity) $\pE[f^2]\geq 0$ for all $f\in\R[x_1, \ldots, x_n]_{\leq d/2}$.
    \end{enumerate}
    Finally, denote by $\SoS_d$ $(\SoS_{\geq d})$ the set of all degree $d$ ($\geq d$) pseudo-expectations over $\Fits^n$.
\end{definition}
We will make a slight abuse of terminology and use the phrase ``pseudo-expectation'' to mean a pseudo-expectation over $\Fits^n$.

Suppose we have a function $f:\Fits^n\to\R$, and suppose the function attains its maximum on $A\seq\Fits^n$. It is easy to verify that the expectation operator associated to any probability distribution supported on $A$ is also a degree-$d$ pseudo-expectation for all $2\leq d\leq n$. In particular, $\max_{x\in\Fits^n}f(x)\leq \sup_{\pE\in\SoS_d}\pE[f]$ for all $2\leq d\leq n$. We also note the following elementary facts about pseudo-distributions on the hypercube:
\begin{fact}
\label{sosbasicfacts}
    Let $\pE$ be a pseudo-expectation of degree $\geq 2$ on the hypercube. Then:
    \begin{enumerate}[(1)]
        \item For all $i\in[n]$, $\pE[x_i]^2\leq 1$. In general, if $\pE$ is a pseudo-expectation of degree $\geq d$, and if $f\in\R[x_1,\ldots, x_n]_{\leq d/2}$, then $\pE[f]^2\leq\pE[f^2]$. In particular, for any $S\seq[n]$, we have $|\pE[x_S]|\leq 1$.
        \item   For all $f\in\R[x_1,\ldots, x_n]_{\leq d-2}$, we have $\pE\left[f\cdot \norm{x}_2^2 \right] = \sum_i \pE[f\cdot x_i^2] = \sum_i \pE[f]= n \pE[f]$.
    \end{enumerate}
\end{fact}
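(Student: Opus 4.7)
Both parts of the fact follow essentially mechanically from the three pseudo-expectation axioms (Normalization, Booleanity, Positivity), so there is no real ``main obstacle''; the plan is simply to apply each axiom in the obvious place.

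For the first claim of (1), the plan is to establish the general pseudo-Cauchy--Schwarz inequality $\pE[f]^2 \leq \pE[f^2]$ for any $f \in \R[x_1,\dots,x_n]_{\leq d/2}$, from which the stated bounds are immediate corollaries. To do this, I would consider the polynomial $g(x) := f(x) - \pE[f] \cdot 1$, which has degree $\leq d/2$ since $\pE[f]$ is just a scalar. Then $g^2$ has degree $\leq d$, so positivity applies and gives $\pE[g^2] \geq 0$. Expanding $g^2$ by linearity of $\pE$ and using normalization ($\pE[1] = 1$) to handle the constant terms yields
\begin{equation*}
0 \leq \pE[g^2] = \pE[f^2] - 2\pE[f] \cdot \pE[f] + \pE[f]^2 \cdot \pE[1] = \pE[f^2] - \pE[f]^2,
\end{equation*}
which rearranges to the desired inequality. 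Specializing to $f = x_i$ gives $\pE[x_i]^2 \leq \pE[x_i^2] = \pE[1] = 1$ by one application of Booleanity followed by Normalization. For the last statement, specializing to $f = x_S$ where $|S| \leq d/2$ gives $\pE[x_S]^2 \leq \pE[x_S^2] = \pE[\prod_{i \in S} x_i^2]$; then iterating Booleanity $|S|$ times (peeling off one factor of $x_i^2$ at a time, each time using $\pE[h \cdot x_i^2] = \pE[h]$ with $h$ a lower-degree polynomial) reduces the right-hand side to $\pE[1] = 1$, giving $|\pE[x_S]| \leq 1$.

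For part (2), the plan is simply to expand $\norm{x}_2^2 = \sum_{i=1}^n x_i^2$ and invoke linearity and Booleanity term-by-term. Concretely,
\begin{equation*}
\pE\bigl[f \cdot \norm{x}_2^2\bigr] = \pE\Bigl[\,\sum_{i=1}^n f \cdot x_i^2\,\Bigr] = \sum_{i=1}^n \pE[f \cdot x_i^2] = \sum_{i=1}^n \pE[f] = n\pE[f],
\end{equation*}
where the second equality uses linearity of $\pE$ and the third uses Booleanity, which is applicable precisely because $f \in \R[x_1,\dots,x_n]_{\leq d-2}$ so that $f \cdot x_i^2$ has degree $\leq d$. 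This completes the proof; there are no technical subtleties beyond tracking the degree bound on $f$ to ensure that Booleanity and Positivity are applied to polynomials in the allowed degree range.
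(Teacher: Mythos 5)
Your proof is correct and is exactly the standard elementary argument (positivity applied to $f - \pE[f]$, then Booleanity peeled off factor by factor, and term-by-term linearity for part (2)); the paper states this fact without proof, so your write-up simply supplies the routine verification it leaves implicit. The only thing worth noting is that your Cauchy--Schwarz route gives $|\pE[x_S]|\leq 1$ for $|S|\leq d/2$, which matches the paper's ``in particular'' reading and covers every use of the fact in the paper.
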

The SoS algorithm can compute all moments up to degree $\leq d$ of some pseudo-expectation in $\sup_{\pE\in\SoS_d}\pE[f]$ in $n^{O(d)}$ time:
\begin{fact}[SoS Algorithm (Corollary 3.40 in \cite{FlemingKP19})]
\label{sosfkp}
    Let $f = f(x_1, \ldots, x_n)$ be a polynomial of degree $t$ with rational coefficients such that each coefficient has $\poly(n)$ bit complexity. Then for any $d\geq t$, there exists an algorithm which on inputting $f$ and $d$ runs in time $n^{O(d)}$ and outputs $\{\alpha_S\}_{S\in\binom{[n]}{\leq d}}$, where $|\alpha_S - \pE_\mu[x_S]|\leq 4^{-n}$ for all $S\in\binom{[n]}{\leq d}$, and $\pE_\mu \in \arg\max_{\pE\in\SoS_d}\pE[f]$. In particular in $n^{O(d)}$ time one can compute $\alpha:= \pE_\mu[f]$ such that $\alpha$ satisfies $\beta + 2^{-n}\geq\alpha\geq\beta$, where $\beta:= \max_{x\in\Fits^n}f(x)$. 
\end{fact}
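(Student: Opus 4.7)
The plan is to realize the optimization $\sup_{\pE \in \SoS_d} \pE[f]$ as a semidefinite program (SDP) of size $n^{O(d)}$ and then invoke a standard polynomial-time SDP solver, with careful attention to the bit complexity needed to meet the stated $4^{-n}$ and $2^{-n}$ error bounds. First I would parameterize pseudo-expectations explicitly: the Booleanity axiom $\pE[x_i^2 g] = \pE[g]$ implies that any $\pE$ on $\Fits^n$ is determined by its values on multilinear monomials $x_S = \prod_{i \in S} x_i$, so I take as SDP variables $y_S := \pE[x_S]$ for $S \subseteq [n]$ with $|S| \leq d$. The number of variables is $\sum_{j=0}^d \binom{n}{j} = n^{O(d)}$; the normalization axiom becomes $y_\emptyset = 1$; the objective $\pE[f] = \sum_S \widehat{f}(S)\, y_S$ is linear in the $y_S$; and positivity is encoded by requiring the moment matrix $M$, with rows and columns indexed by multilinear monomials of degree $\leq d/2$ and entries $M_{S,T} := y_{S \triangle T}$, to be PSD. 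This encoding is valid because for any polynomial $p(x) = \sum_T v_T x_T$ of degree $\leq d/2$, reducing $p^2$ modulo the relations $x_i^2 = 1$ gives $\pE[p^2] = v^\top M v$, so $M \succeq 0$ is equivalent to the positivity axiom restricted to multilinear squares, which is enough by Booleanity.

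Next I would run the ellipsoid method (or an interior point method) on this SDP, which has $n^{O(d)}$ scalar variables and a single PSD constraint of dimension $n^{O(d/2)}$. Two standard preconditions must be checked: (i) boundedness of the feasible region, which follows from $|y_S| \leq 1$ (an immediate consequence of positivity and normalization, cf.\ \cref{sosbasicfacts}), and (ii) existence of a strictly feasible interior point with mild margin, which is supplied by the uniform measure on $\Fits^n$ (giving $y_\emptyset = 1$ and $y_S = 0$ for $S \neq \emptyset$, whose moment matrix is the identity). With these in place, the ellipsoid method run with $\poly(n, d)$ bits of precision returns in time $n^{O(d)}$ a feasible point whose objective is within additive $2^{-n}$ of the SDP optimum; a standard rounding step then yields the individual moments $\alpha_S$ to accuracy $4^{-n}$.

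The step I expect to be the main obstacle is the precision accounting: verifying that a $\poly(n)$-bit-precision SDP solver delivers \emph{moments} within $4^{-n}$ of those of a true optimal pseudo-expectation $\pE_\mu$, as opposed to merely an objective value close to the SDP optimum. This requires controlling the condition number of the moment matrix on the SoS feasible region and the bit complexity of an optimizer (and of its dual), both of which are standard but somewhat delicate; this is the content of the careful treatment in \cite{FlemingKP19}. Everything else is purely structural and reads off directly from the SoS axioms in \cref{sosaxioms}.
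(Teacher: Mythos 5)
The paper offers no proof of this statement at all: it is imported as a black box, namely Corollary~3.40 of \cite{FlemingKP19}, so there is no internal argument to compare yours against. Your sketch is the standard route behind that citation and is structurally sound: parameterize degree-$d$ Boolean pseudo-expectations by their multilinear moments $y_S=\pE[x_S]$, impose $y_\emptyset=1$ and positive semidefiniteness of the moment matrix with entries $y_{S\triangle T}$ (which captures the positivity axiom after reducing squares modulo $x_i^2=1$), observe that the feasible region is bounded since $|y_S|\le 1$ (cf.\ \cref{sosbasicfacts}) and contains the well-centered point given by the uniform distribution on $\Fits^n$ (identity moment matrix, so Slater's condition holds relative to the affine constraint $y_\emptyset=1$), and solve the resulting $n^{O(d)}$-size SDP by the ellipsoid method. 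The genuinely delicate step — which you correctly identify but do not carry out — is upgrading ``a nearly feasible point whose objective is within $2^{-n}$ of the SDP optimum'' to ``moments within $4^{-n}$ of those of an \emph{exact} optimizer $\pE_\mu$'': near-optimal points of an SDP need not lie near the optimal face, and it is exactly the conditioning/bit-complexity analysis of this particular SDP that the cited corollary supplies; generic solver guarantees alone do not give it. Since you defer precisely that part back to \cite{FlemingKP19}, your proposal is a faithful outline of the argument behind the citation rather than an independent proof, which puts you in the same position as the paper itself, and that is acceptable here. (As a side remark, the ``in particular'' clause of the Fact should be read with $\beta$ as the value of the degree-$d$ relaxation rather than literally $\max_{x\in\Fits^n}f(x)$; the paper only ever uses the moment-closeness guarantee, e.g.\ in the proof of \cref{lem:approxrecovery}, so this does not affect anything downstream.)
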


We also state a generalization of \cite[Lemma~A.5]{HopkinsSS15} which we need for ``rounding/decoding'' an approximately correct solution.
\begin{restatable}{fact}{highcorrsos}
\label{fact:highcorrsos}
    Let $k\geq 2$ be any integer. Let $\pE$ be a degree $(k + 2)$ pseudo-expectation on the hypercube such that $\pE[\langle x, x^*\rangle^k]\geq n^k(1 - \delta)$, for some $x^*\in\Fits^n, \delta\in(0, 1)$. Then:
    \begin{enumerate}[(1)]
        \item If $k$ is odd, then $\pE[\langle x, x^*\rangle]\geq n(1 - 2\delta)$, 
        \item If $k$ is even, then $\pE\left[\langle x, x^*\rangle^2\right]\geq n^2(1 - 2\delta)$.
    \end{enumerate}

\end{restatable}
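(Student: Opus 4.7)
The plan is to establish two polynomial identities and sum-of-squares (SoS) certify the right-hand sides on the hypercube. Let $y := \langle x, x^*\rangle$. The key identities are
\begin{align*}
n^k + n^{k-1} y - 2y^k &= (n-y)\,Q_k(y) &&\text{(odd $k$)}, \\
n^k + n^{k-2} y^2 - 2y^k &= (n^2 - y^2)\,P_k(y) &&\text{(even $k$)},
\end{align*}
where $Q_k(y) := n^{k-1} + 2\sum_{j=1}^{k-1} n^{k-1-j} y^j$ and $P_k(y) := n^{k-2} + 2\sum_{j=1}^{k/2-1} n^{k-2-2j} y^{2j}$; both can be verified by direct (telescoping) expansion. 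Once the right-hand sides are SoS of polynomials of degree at most $\lfloor (k+2)/2\rfloor$, applying $\pE$ gives $\pE[(n-y)Q_k(y)] \geq 0$ or $\pE[(n^2-y^2)P_k(y)] \geq 0$, which rearranges (using $\pE[y^k] \geq n^k(1-\delta)$) to $\pE[y] \geq (2\pE[y^k] - n^k)/n^{k-1} \geq n(1 - 2\delta)$ for odd $k$, and $\pE[y^2] \geq (2\pE[y^k] - n^k)/n^{k-2} \geq n^2(1-2\delta)$ for even $k$.

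As preliminary SoS facts, Booleanity ($x_i^2 = (x^*_i)^2 = 1$) implies $(1 \pm x_i x^*_i)^2 = 2(1 \pm x_i x^*_i)$, and hence
\[
n - y \;=\; \tfrac{1}{2}\sum_{i}(1 - x_i x^*_i)^2, \qquad n + y \;=\; \tfrac{1}{2}\sum_{i}(1 + x_i x^*_i)^2,
\]
each an SoS of polynomials of degree $1$. Similarly, direct expansion (again using Booleanity) yields the identity $\tfrac{1}{2}\sum_{i \neq j}(x_i x^*_i - x_j x^*_j)^2 = n^2 - y^2$, so $n^2 - y^2$ is also SoS of degree-$1$ polynomials.

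In the even case, every summand $n^{k-2-2j} y^{2j}$ of $P_k$ is the perfect square $(n^{(k-2-2j)/2} y^j)^2$ (the exponent $k-2-2j$ is even), so $P_k$ is SoS of polynomials of degree $(k-2)/2$; multiplying by the SoS $n^2 - y^2$ gives $(n^2 - y^2) P_k$ as SoS of polynomials of degree $k/2 \leq \lfloor (k+2)/2\rfloor$, certifiable by a degree-$(k+2)$ pseudo-expectation. In the odd case, the nontrivial step is certifying $Q_k$ as SoS; the key rewrite is
\[
Q_k(y) \;=\; y^{k-1} + (y+n)^2\,R_{k-2}(y), \qquad R_{k-2}(y) := \sum_{j=0}^{(k-3)/2} n^{2j} y^{k-3-2j},
\]
verified by coefficient matching. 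Since $k$ is odd, each exponent $k-3-2j$ is even, making each summand of $R_{k-2}$ the perfect square $(n^j y^{(k-3-2j)/2})^2$ of degree $(k-3)/2$; combined with $y^{k-1} = (y^{(k-1)/2})^2$ and $(y+n)^2$, this shows $Q_k$ is SoS of polynomials of degree $(k-1)/2$, and multiplying by the SoS $n - y$ gives $(n-y)Q_k$ as SoS of polynomials of degree $(k+1)/2 \leq \lfloor(k+2)/2\rfloor$.

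The main obstacle is locating the two polynomial identities and in particular the SoS decomposition of $Q_k$ via the auxiliary $R_{k-2}$; this is the only piece of the argument that is not essentially routine. The identities are motivated by the real-valued inequalities $2y^k \leq n^{k-1} y + n^k$ (odd) and $2y^k \leq n^{k-2} y^2 + n^k$ (even), both of which hold on $[-n, n]$ with equality at $y = n$ (resp.\ $y = \pm n$), thus forcing $(n-y)$ (resp.\ $(n^2 - y^2)$) as a factor of the LHS; the coefficient $2$ in front of $y^k$ cannot be lowered, which accounts for the factor of $2$ in the conclusion. The case $k = 2$ is immediate, since the conclusion is then a weaker form of the hypothesis.
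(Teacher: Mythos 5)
Your proposal is correct, and it proves exactly the inequalities the paper certifies, namely that $n^k + n^{k-1}y - 2y^k$ (odd $k$) and $n^k + n^{k-2}y^2 - 2y^k$ (even $k$), with $y = \langle x, x^*\rangle$, are nonnegative under $\pE$; these are just $n^k p_k(y)$ for the paper's polynomials $p_k$. The difference is in how the certificate is produced: the paper invokes the univariate representation theorem \cite[Fact~3.2]{OZhou13} to obtain SoS multipliers $s_0, s_1, s_2$ abstractly, and then uses the hypercube identities $n \pm y = \tfrac12\lVert x \pm x^*\rVert_2^2$ and $n^2 - y^2 = \sum_{i<j}(x_i x^*_j - x_j x^*_i)^2$ (modulo Booleanity), whereas you bypass the citation entirely by exhibiting the explicit factorizations $(n-y)Q_k(y)$ and $(n^2-y^2)P_k(y)$ together with explicit SoS decompositions of $Q_k$ (via $Q_k = y^{k-1} + (y+n)^2 R_{k-2}$) and $P_k$ (termwise perfect squares), and your hypercube representations $n - y = \tfrac12\sum_i(1 - x_i x^*_i)^2$ and $n^2 - y^2 = \tfrac12\sum_{i\neq j}(x_i x^*_i - x_j x^*_j)^2$ are equivalent to the paper's modulo Booleanity. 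I checked your identities (they telescope as claimed) and your degree bookkeeping: the squares have degree at most $(k+1)/2$ (odd) and $k/2$ (even) in $x$, which a degree-$(k+2)$ pseudo-expectation handles, and the Booleanity substitutions only touch polynomials of degree at most $k = (k+2)-2$, so everything is within budget. What your route buys is a self-contained, fully explicit certificate (and, incidentally, a decomposition in the odd case using only the single multiplier $n-y$ rather than both $n\pm y$); what the paper's route buys is brevity, since the existence of the multipliers is delegated to a standard fact.
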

We prove this statement in \cref{highcorrsosproof}.

\subsection{Sum-of-Squares refutation of \texorpdfstring{$k$}{k}-XOR}

We now define the canonical degree $d$ SoS relaxation associated to a system of equations:
For any hypergraph $\cH$, and any equation system $\{x_C = b_C\}_{C\in\cH}$, write $\psi(x):= \E_{C\sim\cH}[b_Cx_C] = \frac{1}{|\cH|}\sum_{C\in\cH}b_Cx_C$. The canonical degree $d$ SoS relaxation associated to $\psi$ (and hence to the system of equations) is $\arg\max_{\pE\in\SoS_d}\pE[\psi(x)]$. By \cref{sosfkp}, we can calculate any moment (up to degree $d$) of this relaxation in $n^{O(d)}$ time.

We crucially need the following theorem for refuting $k$-XOR systems due to Hsieh, Kothari, and Mohanty (\cite[Theorem~4.1]{HsiehKM23}):
\begin{fact}[Refuting $k$-XOR Systems]
\label{fact:hkmrefute}
    Let $k\in\N$, and let $\ell$ be a parameter such that $2k\leq\ell\leq n/8$. Let $\cH\seq[n]^{(k)}$ be any hypergraph such that 
    \begin{equation*}|\cH|\geq m_0:= 2^{O(k)}\cdot \left(\frac{n}{\ell}\right)^{k/2-1}\cdot\frac{n\log n}{\delta^4}\mcom\end{equation*} where $\delta\in(0, 1/2)$ is an arbitrary parameter. Consider the polynomial 
    \begin{equation*}\psi(x):= \E_{C\sim\cH}[\sigma_Cx_C]\mcom\end{equation*}
    where $\{\sigma_C\}_{C\in\cH}$ are i.i.d. uniform $\{-1, 1\}$-valued random variables. Then with probability $\geq 1 - 1/\poly(n)$ over the draw of $\{\sigma_C\}_{C\in\cH}$, for every pseudo-expectation $\pE$ of degree $\geq 2\ell$, we have $|\pE[\psi(x)]|\leq\delta$.
\end{fact}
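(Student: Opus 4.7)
The plan is to prove this via a spectral reduction to bounding the operator norm of a random signed ``Kikuchi'' matrix. First I would construct, for even $k$, the level-$\ell$ Kikuchi matrix $A \in \mathbb{R}^{N \times N}$ with $N = \binom{n}{\ell}$ indexed by $\ell$-subsets of $[n]$: set $A(S,T) = \sigma_C$ whenever $S \triangle T = C$ for some $C \in \cH$ satisfying $|S \cap C| = k/2$, and $0$ otherwise. For odd $k$ one uses a bipartite variant that splits each hyperedge into parts of sizes $\lceil k/2\rceil$ and $\lfloor k/2\rfloor$; the remaining steps are analogous.

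The next step is an SoS-valid spectral inequality. Double-counting the pairs $(S,T)$ of $\ell$-subsets with $S\triangle T = C$ and $|S\cap C|=k/2$ yields the polynomial identity
\begin{equation*}
M \cdot |\cH| \cdot \psi(x) \;=\; \sum_{S,T}A(S,T)\,x_S x_T\mcom
\end{equation*}
where $M := \binom{\ell}{k/2}\binom{n-\ell}{k/2}$ and the identity uses the Boolean fact $x_{S\triangle T} = x_S x_T$, which has a degree $\leq 2\ell$ SoS proof from the axioms $x_i^2 = 1$. Applying the standard SoS Cauchy-Schwarz bound $\sum_{S,T}A(S,T)x_S x_T \leq \|A\|\sum_S x_S^2$ to any pseudo-expectation $\pE$ of degree $\geq 2\ell$, together with $\pE[x_S^2]=\pE[1]=1$ from \cref{sosbasicfacts}, gives $|\pE[\psi(x)]| \leq \|A\|\cdot N/(M|\cH|)$. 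It therefore suffices to show that $\|A\| \leq \delta \cdot M|\cH|/N$ with probability $\geq 1 - 1/\poly(n)$.

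The operator-norm bound is proved via the trace moment method. Set $p \asymp \log n$ and use $\|A\|^{2p} \leq \operatorname{tr}(A^{2p})$. Expanding $\E[\operatorname{tr}(A^{2p})]$ gives a sum over closed walks of length $2p$ in the Kikuchi graph in which each step is labeled by a hyperedge of $\cH$; by independence of the Rademacher signs $\sigma_C$, only walks that traverse each hyperedge an \emph{even} number of times survive in expectation. One bounds the number of such even walks by an encoding argument that records, at each step, whether the walk enters a new vertex or new hyperedge versus reuses a previously visited one. Markov's inequality then converts the bound on $\E[\operatorname{tr}(A^{2p})]$ into a high-probability bound on $\|A\|$.

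The main obstacle is getting the sharp combinatorial estimate in the last step: a naive matrix-Bernstein bound would lose a factor of $\log N \asymp \ell\log n$, which is exactly the factor one must \emph{not} lose in order to attain the threshold $|\cH|\geq 2^{O(k)}(n/\ell)^{k/2-1}n\log n/\delta^4$. Achieving the correct rate requires first a row-pruning preprocessing step that restricts $A$ to rows of (nearly) average sparsity $\asymp M|\cH|/N$ (so that the remaining matrix has bounded row norms and the walk counts are controlled), and second a careful inductive walk count exploiting the fact that once a walk reuses a hyperedge it must also revisit vertices, so that the ``effective branching factor'' is tight. For odd $k$ one verifies that the bipartite variant yields the same threshold, using that the bipartite Kikuchi graph has the same typical degree up to constants.
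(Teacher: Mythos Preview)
The paper does not prove this statement: it is stated as a \emph{Fact} cited from \cite[Theorem~4.1]{HsiehKM23}, with only a remark explaining why the spectral argument there is captured by a degree-$2\ell$ Sum-of-Squares proof. So there is no ``paper's own proof'' to compare against; your proposal is in effect a sketch of the proof in the cited reference rather than of anything in this paper.

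That said, your sketch is the right approach and matches how \cite{HsiehKM23} (building on \cite{WeinAM19}) actually proves the result: reduce to an operator-norm bound on the signed Kikuchi matrix via an SoS-valid quadratic-form identity, then control the norm by row-pruning plus a trace-power argument. Two small corrections. First, the multiplicity $M$ of a fixed $C$ among pairs $(S,T)$ with $S\triangle T=C$ and $|S\cap C|=k/2$ is $\binom{k}{k/2}\binom{n-k}{\ell-k/2}$, not $\binom{\ell}{k/2}\binom{n-\ell}{k/2}$; this affects the arithmetic but not the method. Second, for odd $k$ the route in \cite{HsiehKM23} is not a direct bipartite Kikuchi construction but a Cauchy--Schwarz step that reduces to an even-arity instance (this is also where the $\delta^{-4}$, rather than $\delta^{-2}$, dependence comes from, as the paper notes in its remark on \cite[Remark~2]{HsiehKM23}). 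Your bipartite variant can also be made to work, but you should be aware it is not literally what the cited reference does.
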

\begin{remark}
    A few remarks are due:
    \begin{enumerate}[(1)]
        \item \cite{HsiehKM23} treats hypergraphs as collections of subsets of $[n]$. But note that any $\cH\seq[n]^{(k)}$ can be treated as a $k$-uniform hypergraph in the sense of \cite{HsiehKM23} by simply flattening tuples into sets. 
        \item The formal statement in~\cite[Theorem~4.1]{HsiehKM23} does not state that their algorithm is ``captured'' by degree-$2\ell$ Sum-of-Squares in the above sense. However, their algorithm goes via showing an upper bound on the spectral norm of a certain level $\ell$ Kikuchi matrix (\cref{def:evenkikuchi}), and it is straightforward to see that this reasoning is captured by a degree $2\ell$ Sum-of-Squares proof.
    \end{enumerate}
\end{remark}

In this paper, it will be more convenient to work in the case where $\cH$ is a subsample of $[n]^k$, rather than $[n]^{(k)}$. Because of this, we prove the following corollary using \cref{prop:hypergraphclean}, which is essentially just \cref{fact:hkmrefute} with this other distribution.
\begin{restatable}{corollary}{refuterandomhypergraph}
\label{cor:refuterandomhypergraph}
    Let $\cH$ be a collection of $m$ u.a.r.\ samples from $\cK:= [n]^k$. Let $\{\sigma_C\}_{C\in\cH}$ be i.i.d Rademacher random variables. If $2k\leq\ell\leq n/8$ is such that 
    \begin{equation*}
        m\geq \frac{2^{O(k)}}{\delta^4}\cdot\left(\frac{n}{\ell}\right)^{k/2-1}\cdot n\log n
    \end{equation*}
    for some $\delta\in(0, 1/2)$, then for any pseudo-expectation $\pE$ of degree $\geq 2\ell$, we have, with probability $\geq 1 - 1/\poly(n)$ over the randomness of $\cH$ and $\{\sigma_C\}_{C\in\cH}$,
    \begin{equation*}\left|\pE\left[\E_{C\sim\cH}\sigma_Cx_C\right]\right|\leq\delta + \frac{2k^2}{n}\mper\end{equation*}
\end{restatable}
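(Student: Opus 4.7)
The plan is to reduce to \cref{fact:hkmrefute} by partitioning $\cH$ into the sub-hypergraph of distinct-entry tuples and its complement. Write $\cH = \cH_d \sqcup \cH_r$ where $\cH_d := \cH \cap [n]^{(k)}$ and $\cH_r := \cH \setminus [n]^{(k)}$. By \cref{prop:hypergraphclean}, with probability $\geq 1 - 1/\poly(n)$ over the draw of $\cH$ we have $|\cH_r|/|\cH|\leq 2k^2/n$, so $|\cH_d| \geq m(1-2k^2/n)$. One may assume $2k^2/n \leq 1/2$, as otherwise the conclusion is either trivial (when $\delta + 2k^2/n \geq 1$) or lies outside the parameter regime of interest in the paper; under this assumption $|\cH_d| \geq m/2 \geq m_0$, where the factor of $2$ is absorbed into the implicit $2^{O(k)}$ in the lower bound on $m$.

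Next, I would decompose
\begin{equation*}
\pE\!\left[\E_{C\sim\cH}\sigma_C x_C\right] \;=\; \frac{1}{|\cH|}\sum_{C\in\cH_d}\sigma_C\pE[x_C] \;+\; \frac{1}{|\cH|}\sum_{C\in\cH_r}\sigma_C\pE[x_C]\mper
\end{equation*}
The second sum is bounded in absolute value by $|\cH_r|/|\cH| \leq 2k^2/n$, using that $|\pE[x_C]|\leq 1$: this follows from \cref{sosbasicfacts}(1) after using $x_i^2=1$ to collapse $x_C$ (a product over a tuple with possible repeats) into $x_S$ for some subset $S\seq[n]$. For the first sum, I condition on the draw of $\cH$ (and therefore of $\cH_d$) and apply \cref{fact:hkmrefute} to the deterministic hypergraph $\cH_d \subseteq [n]^{(k)}$ with the same $\delta$. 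This yields, with probability $\geq 1 - 1/\poly(n)$ over the independent draw of $\{\sigma_C\}_{C\in\cH_d}$, that every degree-$2\ell$ pseudo-expectation satisfies $\left|\tfrac{1}{|\cH_d|}\sum_{C\in\cH_d}\sigma_C\pE[x_C]\right| \leq \delta$. Multiplying by $|\cH_d|/|\cH| \leq 1$, combining with the $2k^2/n$ bound on the second sum, and union bounding over the two failure events yields the desired estimate with probability $\geq 1 - 1/\poly(n)$.

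The only real content here is bookkeeping: checking that after restricting to $\cH_d$ the size hypothesis of \cref{fact:hkmrefute} is still met (handled by the factor-of-two slack above), and that conditioning on $\cH$ preserves the independence of the signs $\{\sigma_C\}$ that \cref{fact:hkmrefute} requires. Both are immediate since $\cH$ and $\{\sigma_C\}$ are drawn independently in the definition of the noisy $k$-XOR distribution.
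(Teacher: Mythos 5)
Your proposal is correct and follows essentially the same route as the paper's proof: split $\cH$ into its distinct-entry part and the remainder via \cref{prop:hypergraphclean}, bound the remainder's contribution by $2k^2/n$ using $|\pE[x_C]|\leq 1$, and apply \cref{fact:hkmrefute} to the distinct-entry sub-hypergraph (with the $1/(1-2k^2/n)$ size slack absorbed into the $2^{O(k)}$ factor). The only cosmetic difference is that you make the conditioning-on-$\cH$ step and the edge case $2k^2/n>1/2$ explicit, which the paper handles implicitly.
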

We prove this statement in \cref{refuterandomhypergraphproof}.

\subsection{CSP to \texorpdfstring{$k$}{k}-XOR reduction}
We describe the CSP to XOR reduction we use to reduce solving random CSPs to solving random XOR systems, which has appeared in prior work~\cite{FeldmanPV15,GuruswamiHKM23}.

Fix a positive integer $m$, an assignment $x^*\in\Fits^n$, a predicate $P:\Fits^k\to\{0, 1\}$, and a planting distribution $\cQ$ for $P$. Sample $\Psi\sim\Psi(x^*, m, \cQ)$. For any $S\seq[k]$, define $\psi^{(S, +)}$ as the set of equations
\begin{equation*}\psi^{(S, +)}:= \left\{\prod_{j\in S}x_{i_j} = \prod_{j\in S}\literalneg(C)_j\right\}_{C = (i_1, \dots, i_k)\in\cH}\mcom\end{equation*}
and define $\psi^{(S, -)}$ similarly as
\begin{equation*}\psi^{(S, -)}:= \left\{\prod_{j\in S}x_{i_j} = -\prod_{j\in S}\literalneg(C)_j\right\}_{C = (i_1, \dots, i_k)\in\cH}\mper\end{equation*}
Finally, for any distribution $\cQ:\Fits^k\to[0, 1]$, and any $S\seq[k]$, define the Fourier coefficient 
\begin{equation*}\widehat{\cQ}(S):= \frac{1}{2^k}\sum_{y\in\Fits^k}\cQ(y) \prod_{j \in S} y_j\mper\end{equation*}
Note that $|\widehat{\cQ}(S)|\leq 2^{-k}$ for all $S\seq[k]$, since $\sum_{y\in\Fits^k}\cQ(y) = 1$. We recall the following standard fact from the analysis of Boolean functions:
\begin{fact}[Plancherel's Theorem]
 \label{plancherel}
    For any function $f:\Fits^k\to\R$, we have 
    \begin{equation*}\frac{1}{2^k}\sum_{y\in\Fits^k}f(y)^2 = \sum_{S\seq[k]}\widehat{f}(S)^2\mcom\end{equation*}
    where recall that $\widehat{f}(S):= 2^{-k}\sum_{y\in\Fits^k}f(y)\prod_{j\in S}y_j$.
\end{fact}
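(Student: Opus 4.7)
The plan is to prove Plancherel's theorem by establishing that the characters $\chi_S(y) := \prod_{j \in S} y_j$ for $S \subseteq [k]$ form an orthonormal basis of the $2^k$-dimensional real vector space of functions $\Fits^k \to \R$ under the inner product $\langle f, g \rangle := 2^{-k} \sum_{y \in \Fits^k} f(y) g(y)$, and then expand $f$ in this basis.

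First I would check orthonormality: for any $S, T \subseteq [k]$, using $y_j^2 = 1$, one has $\chi_S(y) \chi_T(y) = \prod_{j \in S \triangle T} y_j = \chi_{S \triangle T}(y)$. Thus
\begin{equation*}
\langle \chi_S, \chi_T \rangle = \frac{1}{2^k} \sum_{y \in \Fits^k} \chi_{S \triangle T}(y).
\end{equation*}
If $S \triangle T = \emptyset$ (i.e.\ $S = T$) the sum is $2^k$ and the inner product equals $1$. Otherwise, pick any $j \in S \triangle T$; pairing each $y$ with the vector $y'$ obtained by flipping coordinate $j$ shows that $\chi_{S \triangle T}(y) + \chi_{S \triangle T}(y') = 0$, so the sum vanishes. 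This gives $\langle \chi_S, \chi_T \rangle = \mathbb{1}[S = T]$.

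Next, since there are exactly $2^k$ characters $\chi_S$ and the space of real-valued functions on $\Fits^k$ has dimension $2^k$, the orthonormal set $\{\chi_S\}_{S \subseteq [k]}$ is in fact an orthonormal basis. Hence every $f \colon \Fits^k \to \R$ admits a unique expansion $f = \sum_{S \subseteq [k]} c_S \chi_S$, and by orthonormality $c_S = \langle f, \chi_S \rangle = 2^{-k} \sum_y f(y) \chi_S(y) = \widehat{f}(S)$, matching the definition in the statement.

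Finally, applying bilinearity of $\langle \cdot, \cdot \rangle$ and orthonormality to $f = \sum_S \widehat{f}(S) \chi_S$ yields
\begin{equation*}
\frac{1}{2^k} \sum_{y \in \Fits^k} f(y)^2 = \langle f, f \rangle = \sum_{S, T \subseteq [k]} \widehat{f}(S) \widehat{f}(T) \langle \chi_S, \chi_T \rangle = \sum_{S \subseteq [k]} \widehat{f}(S)^2,
\end{equation*}
which is the desired identity. This proof is entirely routine; the only ``step'' requiring care is the sign-flipping pairing argument for $\sum_y \chi_U(y) = 0$ when $U \neq \emptyset$, and even that is a one-line computation, so there is no real obstacle.
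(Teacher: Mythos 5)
Your proof is correct and is the standard argument: the paper itself states \cref{plancherel} without proof, simply citing it as a standard fact from the analysis of Boolean functions, and your orthonormal-basis expansion (character orthonormality via the coordinate-flipping pairing, dimension count, then bilinearity) is exactly the canonical way that fact is established.
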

We now show that if $\cQ$ does not have full-support (which will be the case when $P$ is a non-trivial predicate, i.e.\ $P^{-1}(1)\neq\Fits^k$), then it has a non-trivially large non-zero Fourier coefficient: 
\begin{proposition}
\label{largestfouriercoefflowerbound}
    Let $\cQ$ be a probability distribution on $\Fits^k$ such that $\supp(\cQ)\neq\Fits^k$. Then $\max_{\emptyset \neq S\subseteq[k]  }|\widehat{\cQ}(S)| > 4^{-k}$.
\end{proposition}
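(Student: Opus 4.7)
The plan is to use the Fourier inversion formula at a point $y^* \in \Fits^k$ at which $\cQ$ vanishes, and then apply the triangle inequality / pigeonhole over the $2^k - 1$ non-empty subsets of $[k]$.

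First, I would note that since $\cQ$ is a probability distribution, $\widehat{\cQ}(\emptyset) = 2^{-k} \sum_{y \in \Fits^k} \cQ(y) = 2^{-k}$. Next, by hypothesis there exists $y^* \in \Fits^k$ with $\cQ(y^*) = 0$, and Fourier inversion over $\Fits^k$ gives
\[
0 \;=\; \cQ(y^*) \;=\; \sum_{S \subseteq [k]} \widehat{\cQ}(S) \prod_{j \in S} y^*_j \;=\; 2^{-k} + \sum_{\emptyset \neq S \subseteq [k]} \widehat{\cQ}(S) \prod_{j \in S} y^*_j,
\]
so that rearranging yields
\[
\sum_{\emptyset \neq S \subseteq [k]} \widehat{\cQ}(S) \prod_{j \in S} y^*_j \;=\; -2^{-k}.
\]
Since there are exactly $2^k - 1$ non-empty subsets and $\bigl|\prod_{j \in S} y^*_j\bigr| = 1$ for each, the triangle inequality gives
\[
\max_{\emptyset \neq S \subseteq [k]} \bigl|\widehat{\cQ}(S)\bigr| \;\geq\; \frac{2^{-k}}{2^k - 1} \;>\; \frac{1}{2^k \cdot 2^k} \;=\; 4^{-k},
\]
as desired.

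There is no real obstacle here: the argument is a one-line pigeonhole on the Fourier inversion identity at a zero of $\cQ$. The only mild subtlety worth flagging is that the bound $> 4^{-k}$ is strict (which matches the threshold used in \cref{def:distcomp}); this is why we divide by $2^k - 1$ rather than $2^k$ and obtain a strict inequality. I would not bother invoking Plancherel (\cref{plancherel}) here, since the pointwise Fourier inversion identity is exactly what is needed.
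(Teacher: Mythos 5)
Your proof is correct, but it takes a genuinely different route from the paper. The paper argues in $L_2$: by Cauchy--Schwarz, $\sum_y \cQ(y)^2 \geq 1/|\supp(\cQ)| \geq 1/(2^k-1)$, then Plancherel (\cref{plancherel}) converts this into a lower bound on $\sum_{S}\widehat{\cQ}(S)^2$, the empty-set coefficient $\widehat{\cQ}(\emptyset)^2 = 4^{-k}$ is subtracted off, and a final averaging over the $2^k-1$ non-empty sets yields $\max_{\emptyset\neq S}|\widehat{\cQ}(S)| > 4^{-k}$. You instead argue in $L_\infty$/$L_1$: evaluate the Fourier expansion at a single point $y^*$ where $\cQ$ vanishes, isolate $\widehat{\cQ}(\emptyset) = 2^{-k}$, and apply the triangle inequality over the $2^k-1$ remaining terms. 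Both arguments are valid, both give a strict inequality, and amusingly both land on exactly the same quantitative bound $\max_{\emptyset\neq S}|\widehat{\cQ}(S)| \geq \frac{1}{2^k(2^k-1)} > 4^{-k}$. Your version is more elementary (only the pointwise inversion identity at one zero is needed, no Plancherel or support-size counting), while the paper's $L_2$ approach degrades more gracefully into a stronger bound when $|\supp(\cQ)|$ is much smaller than $2^k$ --- irrelevant for the stated threshold, but a mild structural advantage. Either proof is a perfectly acceptable substitute here, since only the $4^{-k}$ threshold from \cref{def:distcomp} is ever used downstream.
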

\begin{proof}
Since $\cQ$ is a probability distribution, $\sum_{y\in\Fits^k}\cQ(y) = 1$. Then by Cauchy-Schwarz, 
\begin{equation*}\sum_{y\in\Fits^k}\cQ(y)^2\geq\frac{1}{|\supp(\cQ)|}\geq\frac{1}{2^k - 1}\mper\end{equation*}
Consequently, by \cref{plancherel}, 
\begin{equation*}\sum_{S\seq[k]}\widehat{\cQ}(S)^2\geq\frac{1}{2^k(2^k - 1)}\mper\end{equation*}
Now, $\widehat{\cQ}(\emptyset) = \frac{1}{2^k}\sum_{y\in\Fits^k}\cQ(y) = 2^{-k}$, and thus 
\begin{equation*}\sum_{\emptyset\neq S\seq[k]}\widehat{\cQ}(S)^2\geq\frac{1}{2^k(2^k - 1)} - \frac{1}{(2^k)^2} = \frac{1}{4^k(2^k - 1)}\mper\end{equation*}
Consequently, 
\begin{equation*}\max_{\emptyset\neq S\seq[k]}|\widehat{\cQ}(S)|^2\geq\frac{1}{2^k - 1}\cdot\frac{1}{4^k(2^k - 1)} > \frac{1}{16^k}\implies\max_{\emptyset\neq S\seq[k]}|\widehat{\cQ}(S)| > 4^{-k}\mper\qedhere\end{equation*}
\end{proof}

With the above notation, we will use the following reduction from CSPs to $k$-XOR from \cite{GuruswamiHKM23}:
\begin{fact}[Claim 4.2, \cite{GuruswamiHKM23}]
\label{csptoxorreduction}
    Let $\Psi\sim\Psi(x^*, m, \cQ)$ for some planting distribution $\cQ$ for a nontrivial predicate $P$.
    Then, for all non-empty $S\seq[k]$, $\psi^{(S, +)}$ is distributed as $\LPN_{|S|}\left(x^*, m, 2^{k-1}\widehat{\cQ}(S)\right)$ and $\psi^{(S, -)}$ is distributed as $\LPN_{|S|}\left(x^*, m, -2^{k-1}\widehat{\cQ}(S)\right)$.
 
Furthermore, if the hypergraph of $\Psi$ is $\cH$, then the underlying hypergraph of $\psi^{(S, \pm)}$ is $\cH\vert_S:= \{C\vert_S: C\in\cH\}$, where for any $C = (i_1, \ldots, i_k)\in[n]^k$ and $S \subseteq [k]$, we define $C\vert_S := (i_j :j\in S)$.
\end{fact}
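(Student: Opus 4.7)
The plan is to verify the claim directly from the definitions of $\Psi(x^*, m, \cQ)$ and $\LPN_{|S|}$, using a short Fourier-analytic computation on the literal-negation random variables. First, fix a non-empty $S \subseteq [k]$ and consider a single scope $C = (i_1, \ldots, i_k) \in \cH$. By construction, the corresponding equation in $\psi^{(S,+)}$ involves precisely the variables indexed by $C|_S = (i_j : j \in S)$, so the underlying hypergraph of $\psi^{(S,\pm)}$ is $\cH|_S$. Since $C$ is sampled uniformly from $[n]^k$ independently of all other scopes, its projection $C|_S$ is uniform in $[n]^{|S|}$ and the projections across $C \in \cH$ are independent, matching the scope distribution in $\LPN_{|S|}(x^*, m, \cdot)$.

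Next, I would analyze the right-hand side $b_C^{(S,+)} := \prod_{j \in S} \literalneg(C)_j$. The key move is the change of variables $z := \literalneg(C) \odot (x^*_{i_1}, \ldots, x^*_{i_k})$. Because element-wise multiplication by a $\Fits$-vector is an involution on $\Fits^k$, the sampling rule $\Pr[\literalneg(C) = y] = \cQ(y \odot (x^*_{i_1},\ldots,x^*_{i_k}))$ translates directly into $z \sim \cQ$. Substituting back,
\begin{equation*}
b_C^{(S,+)} = \Big(\prod_{j \in S} z_j\Big)\cdot \prod_{j \in S} x^*_{i_j},
\end{equation*}
so $b_C^{(S,+)}$ agrees with the ``planted'' value $\prod_{j \in S} x^*_{i_j}$ precisely when $\prod_{j \in S} z_j = 1$.

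It then remains to compute $\Pr_{z \sim \cQ}[\prod_{j \in S} z_j = 1]$. Using the identity $\mathbf{1}[w = 1] = (1 + w)/2$ for $w \in \Fits$ together with the definition of $\widehat{\cQ}(S)$,
\begin{equation*}
\Pr_{z \sim \cQ}\Big[\prod_{j \in S} z_j = 1\Big] = \frac{1}{2} + \frac{1}{2}\sum_{y \in \Fits^k} \cQ(y)\prod_{j \in S} y_j = \frac{1}{2} + 2^{k-1}\widehat{\cQ}(S),
\end{equation*}
which is exactly the bias defining $\LPN_{|S|}(x^*, m, 2^{k-1}\widehat{\cQ}(S))$. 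For $\psi^{(S,-)}$, each right-hand side is negated, swapping the two cases and producing bias $-2^{k-1}\widehat{\cQ}(S)$, i.e.\ $\LPN_{|S|}(x^*, m, -2^{k-1}\widehat{\cQ}(S))$. Independence of the $b_C^{(S,\pm)}$ across distinct $C \in \cH$ is inherited from the independence of $\{\literalneg(C)\}_{C \in \cH}$ in the sampling of $\Psi$.

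The argument is routine and presents no real obstacle beyond careful bookkeeping. The one calibration worth highlighting is that the planting convention $\literalneg(C) \sim \cQ(\cdot \odot (x^*_{i_1},\ldots,x^*_{i_k}))$ is precisely what makes the ``shifted'' vector $z$ marginally distributed as $\cQ$; this is what lets the bias be expressed cleanly in terms of a single Fourier coefficient of $\cQ$, rather than an $x^*$-dependent twisted variant.
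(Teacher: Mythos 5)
Your proof is correct: the change of variables $z = \literalneg(C)\odot(x^*_{i_1},\dots,x^*_{i_k})$ makes $z\sim\cQ$ independently of the scope, and the bias computation $\Pr_{z\sim\cQ}[\prod_{j\in S}z_j=1]=\tfrac12+2^{k-1}\widehat{\cQ}(S)$ together with independence across constraints gives exactly the stated $\LPN_{|S|}$ distributions and the hypergraph claim. The paper does not prove this fact itself --- it imports it verbatim as Claim 4.2 of \cite{GuruswamiHKM23} --- and your direct verification is the same standard argument underlying that cited claim, so there is nothing further to reconcile.
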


\section{From Planted CSPs to Noisy \texorpdfstring{$k$}{k}-XOR}
\label{sec:plantedcsp}
In this section, we present our algorithm for solving random planted CSPs and prove its guarantees (\cref{mthm:csp}) using our algorithm for noisy $k$-XOR (\cref{mthm:noisykxor}) as a subroutine.

We begin by recalling the statement of \cref{mthm:csp}.
\cspalg*

We now prove \cref{mthm:csp}, using \cref{mthm:noisykxor}. 

\begin{proof}[Proof of \cref{mthm:csp} from \cref{mthm:noisykxor}]
We shall show that the following algorithm satisfies all the required properties.
\begin{mdframed}
    \begin{algorithm}[Solving Random Planted CSPs]
    \label{alg:solverandomcsp}    
    \mbox{}
    \begin{description}
        \item[Input:] A CSP predicate $P:\Fits^k\to\{0, 1\}$, a CSP system $\Psi = \{P(x_{i_1}, \ldots, x_{i_k}) = 1\}_{C = (i_1, \dots, i_k)\in\cH}$, where $\cH\seq[n]^k$, and a parameter $2k\leq\ell\leq n/8$.
        \item[Output:] A vector $x\in\Fits^n$.
        \item[Operation:] \mbox{}
        \begin{enumerate}[(1)]
            \item If $P(y) = 1$ for all $y\in\Fits^k$, then return $x:= \1$.
            \item Initialize $\mathcal{V}:= \emptyset$.
            \item \textbf{For} $\emptyset\neq S\seq[k]$, \textbf{For} $\ast\in\{+, -\}$:
            \begin{enumerate}[(a)]
                \item Run the algorithm of \cref{mthm:noisykxor} (\cref{alg:recxxor})  on input $(\psi^{(S, \ast)}, \ell)$ to obtain output $x\in\Fits^n$. Add $x, -x$ to $\mathcal{V}$.
            \end{enumerate}
            \item For every $v\in\mathcal{V}$, check if $v$ satisfies $\Psi$. Return the first satisfying assignment found.
            \item If no satisfying assignment is found in the previous line, return $\1$. 
        \end{enumerate}
    \end{description}
    \end{algorithm}
\end{mdframed}
    If $P(y) = 1$ for all $y\in\Fits^k$, then the CSP instance is trivially satisfiable, and the algorithm may return any vector, say $\1$.

    Thus assume $P^{-1}(1)\subsetneq \Fits^k$, and thus $\supp(\cQ)\neq\Fits^k$. Let $S^*\seq[k]$ be the smallest non-empty set such that $|\widehat{\cQ}(S^*)|\geq 4^{-k}$ (such a set exists due to \cref{largestfouriercoefflowerbound}). Write $r = |S^*|$, and note that $r$ is the distribution complexity of $\cQ$.

    Now, enumerate all non-empty $S\seq[k]$, and for each $S$ guess if $\widehat{\cQ}(S)$ is $ > 0, = 0$ or $< 0$. For each $S$ where we guessed $\widehat{\cQ}(S) > 0$ (resp.\ $< 0$), construct the instance $\psi:= \psi^{(S, +)}$ (resp.\ $\psi:= \psi^{(S, -)}$), run \cref{alg:recxxor} on $\psi$ and note down its output. 
    
    Consequently, we obtain a list of $4(2^k - 1) < 2^{k + 2}$ outputs from the above procedure for $2^k - 1$ different non-empty subsets of $[k]$, $2$ possible signs of each Fourier coefficient, and another factor of $2$ to include the negation of each output when $k$ is even. 
    
    Now, note that since $\cH$ is a collection of $m$ u.a.r.\ samples from $[n]^k$, $\cH\vert_{S^*}$ is a collection of $m$ u.a.r.\ samples from $[n]^{|S^*|} = [n]^r$. Consequently, when we guess the sign of $\widehat{\cQ}(S^*)$ correctly, running \cref{alg:recxxor} on $\psi^{(S^*, \sgn(\widehat{\cQ}(S^*)))}\sim \LPN_{r}\left(x^*, m, \eps\right)$, where $\eps\geq 2^{k - 1}\cdot 4^{-k}\geq 2^{-O(k)}$, yields a vector $x\in\{\pm x^*\}$ with probability $\geq 1 - 1/\poly(n)$, by \cref{csptoxorreduction} and \cref{mthm:noisykxor}. Consequently, with probability $\geq 1 - 1/\poly(n)$, $\mathcal{V}$, which is a list of size $\leq 2^{k + 2}$, contains $x^*$, which is a satisfying assignment for $\Psi$, as desired.
    
    Finally, the algorithm described above takes $O(2^k)\cdot n^{O(\ell)} = n^{O(\ell)}$ time to run, as $\ell \geq 2k$.
\end{proof}

\section{Solving Noisy \texorpdfstring{$k$}{k}-XOR}
\label{sec:noisykxor}
In this section, we prove \cref{mthm:noisykxor}, which is our algorithm to solve noisy $k$-XOR instances, or equivalently $k$-sparse LPN. Below, we recall the statement of \cref{mthm:noisykxor}

\noisykxor*

As explained in \cref{sec:proofoverview}, the algorithm of  \cref{mthm:noisykxor} proceeds in two steps. First, we will recover an approximate solution $\widehat{x}$ is that is close to $x^*$, and then we will round $\widehat{x}$ to recover $x^*$ (up to a global sign). The two steps are captured by the following two lemmas below, which we will prove in \cref{sec:approxrecovery,sec:rounding}, respectively.

\begin{lemma}[Finding an approximate solution]
   \label{lem:approxrecovery}
    Let $m,n$ be positive integers, and let $2\leq k\leq o(n^{1/4})$. Let $\ell$ be a parameter such that $2k\leq \ell\leq n/8$. There is an algorithm that takes as input a $k$-XOR instance $(\cH, \{b_C\}_{C \in \cH})$ and in time $n^{O(\ell)}$ outputs an assignment $\widehat{x} \in \Fits^n$
    with the following guarantee. Suppose that $(\cH, \{b_C\}_{C \in \cH}) \sim\LPN_k(x^*, m, \eps)$, where $\eps\in(0, \frac{1}{2}]$ and 
    \begin{align}
        m\geq\frac{2^{O(k)}}{\eps^6\delta^5}\cdot\left(\frac{n}{\ell}\right)^{k/2-1}\cdot n\log n
    \end{align}
    for some $\delta\in(0, 1/2)$, such that $\eps\delta\geq\Omega(n^{-1/2})$ for some large enough constant in the $\Omega(\cdot)$. Then with probability at least $1 - 1/\poly(n)$, it holds that
\begin{enumerate}[(1)]
    \item If $k$ is odd, then $\corr(\widehat{x}, x^*)\geq 1 - \delta$;
    \item If $k$ is even, then either $\corr(\widehat{x}, x^*)\geq 1 - \delta$ or $\corr(-\widehat{x}, x^*)\geq 1 - \delta$.
\end{enumerate}
\end{lemma}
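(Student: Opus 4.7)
The plan is to apply the canonical degree-$2\ell$ Sum-of-Squares relaxation to the input $k$-XOR system. Using \cref{sosfkp} I would compute, in time $n^{O(\ell)}$, a pseudo-expectation $\pE_\mu$ of degree $2\ell$ on $\Fits^n$ that maximizes $\pE_\mu[\psi(x)]$ where
\begin{equation*}
\psi(x) := \sum_{C \in \cH} b_C \prod_{i \in C} x_i.
\end{equation*}
Since $x^*$ is feasible with $\E[\psi(x^*)] = 2\eps m$, a Chernoff bound gives $\pE_\mu[\psi(x)] \geq \psi(x^*) \geq 2\eps(1-o(1)) m$ with high probability over the LPN noise.

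The next step is to decompose the noise. Viewing each $b_C$ via the two-step process described in \cref{sec:proofoverview}, let $\cH_{2\eps} \subseteq \cH$ be the set of constraints for which $b_C = x^*_C$ deterministically, and write $\psi = \psi_1 + \psi_2$, where $\psi_1$ sums over $\cH_{2\eps}$ and $\psi_2$ sums over $\cH \setminus \cH_{2\eps}$ with i.i.d.\ Rademacher right-hand sides $\sigma_C$. By Chernoff, $|\cH_{2\eps}| = 2\eps m\,(1 \pm o(1))$ with high probability. For the noisy part, \cref{cor:refuterandomhypergraph} applied with refutation parameter $\delta' = \Theta(\eps\delta)$ certifies $|\pE_\mu[\psi_2(x)]| \leq O(\eps\delta)\, m$ as long as the constraint count meets the \cite{HsiehKM23} threshold, and this is what drives the $1/(\eps\delta)^4$ factor in the final $m$. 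For the signal part, substituting $y_i = x_i x^*_i$ (which preserves the Booleanity axiom) turns $\psi_1$ into $\sum_{C \in \cH_{2\eps}} \prod_{i \in C} y_i$, and the local-to-global inequality \cref{lem:localtoglobal} then gives an SoS proof that
\begin{equation*}
\left|\pE_\mu\!\left[\tfrac{1}{|\cH_{2\eps}|}\psi_1(x)\right] - \pE_\mu\!\left[\tfrac{1}{n^k}\langle x,x^*\rangle^k\right]\right| \leq O(\delta).
\end{equation*}
Combining the two estimates and rearranging produces $\pE_\mu[\langle x, x^*\rangle^k] \geq n^k(1-O(\delta))$.

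With this high moment in hand, I round using \cref{fact:highcorrsos}. For odd $k$ the fact yields $\pE_\mu[\langle x, x^*\rangle] \geq n(1-O(\delta))$, so the vector $v \in \R^n$ with $v_i = \pE_\mu[x_i]$ satisfies $\|v\|_2^2 \leq n$ by \cref{sosbasicfacts} and $\langle v, x^*\rangle \geq n(1-O(\delta))$, giving $\corr(v, x^*) \geq 1 - O(\delta)$; then \cref{prop:realtoboolean} certifies that $\widehat{x} := \sgn(v)$ satisfies $\corr(\widehat{x}, x^*) \geq 1 - O(\delta)$. For even $k$ the fact instead yields $\pE_\mu[\langle x, x^*\rangle^2] \geq n^2(1-O(\delta))$; forming the PSD matrix $M \in \R^{n\times n}$ with $M_{ij} := \pE_\mu[x_i x_j]$ (which has trace $n$ and satisfies $(x^*)^\top M x^* \geq n^2(1 - O(\delta))$), the trace constraint forces the top eigenvector $u_1$ of $M$ to satisfy $\langle u_1, x^*/\sqrt{n}\rangle^2 \geq 1 - O(\delta)$ by a standard spectral-tail argument. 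Taking $\widehat{x} := \sgn(u_1)$ and applying \cref{prop:realtoboolean} once more yields $\corr(\widehat{x}, \pm x^*) \geq 1 - O(\delta)$. Chasing the constants through the refutation threshold produces the claimed constraint lower bound, with additional $\eps^2\delta$ factors absorbed from the Chernoff control on $|\cH_{2\eps}|$.

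The main obstacle is justifying the local-to-global step \cref{lem:localtoglobal} as an SoS proof: one has to certify that the empirical $k$-th moment $\E_{C \sim \cH_{2\eps}}[y_C]$ closely matches the ``population'' moment $\tfrac{1}{n^k}(\sum_i y_i)^k$ uniformly over $y \in \Fits^n$, with an error small enough to carry through the analysis. I would approach this by symmetrization: resample a fresh independent hypergraph $\cH'$ of the same distribution, so that $\E_{C \sim \cH_{2\eps}}[y_C] - \E_{C \sim \cH'}[y_C]$ can be written as the average of $\sigma_C y_C$ against i.i.d.\ Rademachers, reducing to the signed refutation bound of \cref{cor:refuterandomhypergraph}. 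The ``population'' part $\E_{C \sim \cH'}[y_C]$ is then identified with $\tfrac{1}{n^k}(\sum_i y_i)^k$ up to a lower-order correction coming from non-distinct tuples, which is controlled by \cref{prop:hypergraphclean}. Everything else—the Chernoff control on $|\cH_{2\eps}|$, plugging the parameters into the refutation threshold, and the final sign/eigenvector rounding—is routine given the black-box facts already in the paper.
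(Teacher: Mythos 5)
Your proposal is correct and, for the most part, follows the paper's own route: the same degree-$2\ell$ SoS program, the same decomposition of the right-hand sides into a $2\eps$-sample $\cH_{2\eps}$ with $b_C = x^*_C$ plus Rademacher noise, the same use of \cref{cor:refuterandomhypergraph} for the noisy part and of the symmetrization-based \cref{lem:localtoglobal} for the signal part, and the same rounding via \cref{fact:highcorrsos} and \cref{prop:realtoboolean} for odd $k$; the parameter bookkeeping ($\nu = \eps\delta$ as the refutation error, leading to the $\eps^{-6}\delta^{-5}$ factor) also matches. The one genuinely different sub-step is your even-$k$ rounding: you take the top eigenvector $u_1$ of $M = \pE_\mu[xx^\top]$, whereas the paper (Steps (2ii)--(2iii) of \cref{alg:approxrecovery}) sign-rounds every row of $M$ and selects a row by pairwise agreement among $0.99n$ rows, finishing with a triangle inequality. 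Your spectral argument does work: $M \succeq 0$ with $\tr(M) = n$ and $(x^*)^\top M x^* \geq n^2(1 - O(\delta))$ force $\lambda_1 \geq n(1-O(\delta))$, hence the remaining spectral mass is $O(\delta)n$ and $\langle u_1, x^*/\sqrt{n}\rangle^2 \geq 1 - O(\delta)$, after which \cref{prop:realtoboolean} applies to $\pm u_1$ (the only cosmetic issue is that $\sgn(-u_1) \neq -\sgn(u_1)$ on zero coordinates of $u_1$, but there are at most $O(\delta)n$ such coordinates, so the guarantee is unaffected up to constants absorbed into $2^{O(k)}$). The eigenvector route is arguably cleaner and more standard; the paper's row-consensus procedure avoids an eigendecomposition at the cost of the averaging-and-selection argument, and both yield identical parameters.

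One small omission in your sketch of \cref{lem:localtoglobal}: symmetrization against a fresh copy $\cH'$ only bounds the \emph{expectation} over $\cH$ of $\sup_{\pE}\pE[\E_{C\sim\cH}y_C - \E_{C\sim\cK}y_C]$ by the Rademacher quantity handled by \cref{cor:refuterandomhypergraph}; to get the stated high-probability guarantee for the particular sampled $\cH$ you still need a concentration step, which the paper supplies via bounded differences ($2/m$ per hyperedge) and McDiarmid's inequality (\cref{fact:mcdiarmid}). Also note that $\E_{C\sim\cK}[y_C] = n^{-k}(\sum_i y_i)^k$ holds exactly since $\cK = [n]^k$ includes repeated indices; the correction for non-distinct tuples via \cref{prop:hypergraphclean} is needed inside \cref{cor:refuterandomhypergraph}, not in this identification. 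Both points are routine and do not affect correctness.
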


\begin{lemma}[Rounding an approximate solution to an exact solution]\label{lem:rounding}
Let $m,n$ be positive integers and let $2\leq k\leq o(\sqrt{n})$. There is an algorithm that takes as input a $k$-XOR instance $(\cH, \{b_C\}_{C \in \cH})$ along with an assignment $\widehat{x} \in \Fits^n$, and in time $\poly(m,n)$ outputs an assignment $\widetilde{x} \in \Fits^n$ with the following guarantee. If $(\cH, \{b_C\}_{C \in \cH}) \sim \LPN_k(x^*, m, \eps)$ where $\corr(\widetilde{x}, x^*)\geq 1 - \delta$ (if $k$ is odd) or $\abs{\corr(\widetilde{x}, x^*)} \geq 1 - \delta$ (if $k$ is even) for some $\delta$ such that $\delta\leq\frac{\eps}{k}$, and if $m\geq\Omega(\eps^{-2}n\log n)$, then with probability $\geq 1 - 1/\poly(n)$ over the draw of the noisy $k$-XOR instance, the output $\widetilde{x}$ satisfies $\widetilde{x} = x^*$ (if $k$ is odd) or $\widetilde{x} \in \{x^*, -x^*\}$ (if $k$ is even).
\end{lemma}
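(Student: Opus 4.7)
The plan is to use the approximate solution $\widehat{x}$ to reduce the rounding problem to one of recovering a highly biased planted assignment, and then to read off each coordinate via a separate majority vote over the constraints that touch it. The first step will be a change of variables: set $y^* := \widehat{x}\odot x^*\in\Fits^n$ and, for each $C\in\cH$, the shifted right-hand side $b'_C := b_C\cdot\prod_{i\in C}\widehat{x}_i$. Since $\widehat{x}_i^2=1$, the equation $\prod_{i\in C}x_i = b_C$ is equivalent to $\prod_{i\in C}(\widehat{x}_i x_i) = b'_C$, and $\Pr[b'_C = \prod_{i\in C}y^*_i] = \tfrac{1}{2}+\eps$ independently across $C$. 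The hypothesis $\corr(\widehat{x},x^*)\geq 1-\delta$ (or its absolute-value version for even $k$, handled at the end) translates to $|E|\leq\delta n/2$ where $E:=\{i:y^*_i=-1\}$, so $y^*$ is extremely biased toward $+1$.

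For each $i\in[n]$, I would let $\cH^{(i)}:=\{C\in\cH:C_1=i\}$. Because the scopes are i.i.d.\ uniform over $[n]^k$, $|\cH^{(i)}|$ is $\mathrm{Binomial}(m,1/n)$; a Chernoff bound plus a union bound yields $|\cH^{(i)}|\in[\tfrac{m}{2n},\tfrac{3m}{2n}]$ simultaneously for all $i$ with probability $1-1/\poly(n)$ as soon as $m\geq\Omega(n\log n)$. Conditional on $C_1=i$, the remaining coordinates $C_2,\dots,C_k$ are i.i.d.\ uniform on $[n]$, so
\[\E[b'_C\mid C_1=i] \;=\; 2\eps\cdot y^*_i\cdot\left(1-\tfrac{2|E|}{n}\right)^{k-1}\mper\]
Bernoulli's inequality and the assumption $\delta\leq\eps/k$ bound the last factor by at least $1-(k-1)\delta\geq 1-\eps\geq\tfrac{1}{2}$, so this conditional expectation has magnitude at least $\eps$ with the same sign as $y^*_i$. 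The rounding step will then set $\widetilde{y}_i:=\sgn\bigl(\sum_{C\in\cH^{(i)}}b'_C\bigr)$ and output $\widetilde{x}:=\widehat{x}\odot\widetilde{y}$. Conditional on $\cH^{(i)}$, the variables $\{b'_C\}_{C\in\cH^{(i)}}$ are i.i.d.\ in $[-1,1]$ with mean $\E[b'_C\mid C_1=i]$, so Hoeffding gives
\[\Pr\!\left[\left|\tfrac{1}{|\cH^{(i)}|}\sum_{C\in\cH^{(i)}} b'_C - \E[b'_C\mid C_1=i]\right|\geq\tfrac{\eps}{2}\,\Big|\,\cH^{(i)}\right] \;\leq\; 2\exp\!\left(-\tfrac{1}{8}\,|\cH^{(i)}|\,\eps^2\right)\mper\]
With $|\cH^{(i)}|\geq m/(2n)$ and $m\geq\Omega(\eps^{-2}n\log n)$, this is $1/\poly(n)$, so a union bound gives $\widetilde{y}=y^*$ with high probability, and hence $\widetilde{x}=\widehat{x}\odot y^* = x^*$. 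For even $k$ with $\corr(\widehat{x},x^*)\leq -(1-\delta)$, an identical computation with $E'=\{i:y^*_i=+1\}$ picks up the extra factor $(-1)^{k-1}=-1$ in $\E[b'_C\mid C_1=i]$, flipping the majority and producing $\widetilde{x}=-x^*$; either way $\widetilde{x}\in\{x^*,-x^*\}$.

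The main obstacle is getting the correct $\Omega(\eps^{-2}n\log n)$ sample threshold rather than $\Omega(\eps^{-2}n^2\log n)$: applying Hoeffding or Bernstein directly to the $m$ i.i.d.\ trials $Y_j := b'_{C_j}\mathbf{1}[C_{j,1}=i]$ loses a factor of $n$ because $\E[Y_j]=\Theta(\mu_i/n)$ while $|Y_j|\leq 1$. The two-step analysis --- first isolating a $\Theta(m/n)$-sized bucket $\cH^{(i)}$ via Chernoff, then applying Hoeffding inside that bucket where each draw is a genuine $\pm 1$ --- recovers the right bound. A minor bookkeeping point is that scopes are drawn from $[n]^k$ rather than $[n]^{(k)}$; by \cref{prop:hypergraphclean} only an $O(k^2/n)=o(1)$-fraction have repeated coordinates, which perturbs $\mu_i$ by $o(\eps)$ and is easily absorbed into the constants.
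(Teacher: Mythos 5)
Your proposal is correct and follows essentially the same route as the paper's proof: both exploit the heavy bias of $\widehat{x}\odot x^*$ toward $+1$ and recover each coordinate by a majority/sign vote over the constraints containing that coordinate, relying on the independence of $\widehat{x}$ from the instance and a per-coordinate Chernoff/Hoeffding bound plus a union bound over $i\in[n]$. The differences are only bookkeeping: you bucket constraints by their first coordinate and concentrate the signed sum around the exact conditional mean $2\eps\, y^*_i(1-2|E|/n)^{k-1}$ (so, in fact, no correction for repeated entries is needed), whereas the paper discards tuples with repeated entries, votes over all constraints containing $i$ via a ``good constraint'' count, and handles the even-$k$ negative-correlation case through the symmetry $\LPN_k(x^*,m,\eps)=\LPN_k(-x^*,m,\eps)$ rather than your direct $(-1)^{k-1}$ sign computation.
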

We now prove \cref{mthm:noisykxor} from \cref{lem:approxrecovery,lem:rounding}.

\begin{proof}[Proof of \cref{mthm:noisykxor} from \cref{lem:approxrecovery,lem:rounding}]
\label{alg:recxxorproof}

We shall show that the following algorithm satisfies all the required properties:

\begin{mdframed}
    \begin{algorithm}[Recovering $x^*$ from Noisy XOR system]
    \label{alg:recxxor}    
    \mbox{}
    \begin{description}
        \item[Input:] A system of $k$-XOR equations $(\cH, \{b_C\}_{C \in \cH})$, where $\cH\seq[n]^k$, and a parameter $2k\leq\ell\leq n/8$.
        \item[Output:] A vector $x\in\Fits^n$.
        \item[Operation:] \mbox{}
        \begin{enumerate}[(1)]
            \item \textbf{If} $k \geq 2$: 
            \begin{enumerate}[(a)]
                \item Split $\cH$ into two parts $\cH = \cH_1\sqcup\cH_2$ with $|\cH_1| = \lceil |\cH|/2\rceil, |\cH_2| = \lfloor |\cH|/2\rfloor$.
                \item Run the algorithm of \cref{lem:approxrecovery} (\cref{alg:approxrecovery}) on input $(\cH_1, \{b_C\}_{C \in \cH_1})$, and let $\widehat{x}$ denote the output.
                \item Run the algorithm of \cref{lem:rounding} (\cref{alg:rounding}) on input $(\widehat{x},\cH_2, \{b_C\}_{C \in \cH_2})$, and let $\widetilde{x}$ denote the output. Return $\widetilde{x}$.
            \end{enumerate}
            \item \textbf{Else if} $k = 1$: 
            \begin{enumerate}[(a)]
                \item For every $i\in[n]$, collect all equations of the form $\{x_i = b_{i_j}\}$, and return $x\in\Fits^n$, where $x_i:= \operatorname{majority}(\{b_{i_j}\})$.
            \end{enumerate}
        \end{enumerate}
    \end{description}
    \end{algorithm}
\end{mdframed}

    First assume $k\geq 2$, and write $\delta:= \eps/k$. Split $\cH$ into two parts $\cH = \cH_1\sqcup\cH_2$ with $|\cH_1| = \lceil m/2\rceil, |\cH_2| = \lfloor m/2\rfloor$. Note that $|\cH_2|\geq\Omega(\eps^{-2}n\log n)$. We now make cases:
    \begin{enumerate}[(1)]
        \item \textbf{$k$ odd}: In this case, applying \cref{lem:approxrecovery} in Step (1b), with probability $1 - 1/\poly(n)$ we obtain $\widehat{x}\in \Fits^n$ with $\corr(\widehat{x}, x^*)\geq 1 - \delta = 1 - \eps/k$. Now, note that the distribution of $\{\cH_1, \{b_C\}_{C \in \cH_1}\}$ (which includes the randomness of $\cH_1$ and the randomness of $\{b_C\}$) is independent of $\{\cH_2, \{b_C\}_{C \in \cH_2}\}$. Since $\widetilde{x}$ is completely determined by $\{\cH_1, \{b_C\}_{C \in \cH_1}\}$, $\widetilde{x}$ is independent of $\{\cH_2, \{b_C\}_{C \in \cH_2}\}$. Consequently, we can apply \cref{lem:rounding} to recover $x^*$ with probability $\geq 1 - 1/\poly(n)$, as desired.
        \item \textbf{$k$ even}: In this case, applying \cref{lem:approxrecovery} in Step (1b), with probability $1 - 1/\poly(n)$ we obtain $\widehat{x} \in \Fits^n$  such that $\abs{\corr(\widehat{x}, x^*)}\geq 1 - \delta$. Again, by the same argument as in Item (1), we have that $\widetilde{x}$ is independent of $\{\cH_2, \{b_C\}_{C \in \cH_2}\}$. Consequently, we can apply \cref{lem:rounding} to recover $x^*$ up to a sign with probability $\geq 1 - 1/\poly(n)$, as desired.
\end{enumerate}

Observe that for $k\geq 2$, the runtime of the algorithm is $n^{O(\ell)} + \poly(m,n)\leq n^{O(\ell)}$, using the runtime bounds in \cref{lem:approxrecovery,lem:rounding}.

Finally, suppose $k = 1$. Note that for any $i\in[n]$, the expected multiplicity of $(i)$ in $\cH$ is $|\cH|/n$. Consequently, by Chernoff $+$ union bound, with probability $\geq 1 - 1/\poly(n)$, for every $i\in[n]$, we have $\geq\Omega(\eps^{-2}\log n)$ equations of the form $\{x_i = b_{i_j}\}$. Furthermore, $b_{i_j} = x^*_i$ with probability $\geq 1/2 + \eps$, and consequently, $\operatorname{majority}(\{b_{i_j}\}) = x^*_i$ with probability $\geq 1 - \exp(-\Omega(\eps^2\cdot (\eps^{-2}\log n)))\geq 1 - 1/\poly(n)$, and the theorem follows by a union bound over all $i\in[n]$.
\end{proof}
\begin{remark}
A few remarks are due:
\begin{enumerate}[(1)]
    \item Note that \cref{alg:recxxor} doesn't assume knowledge of $\eps$. \cref{alg:recxxor} only assumes knowledge of $\ell$, which should be viewed as a proxy for the amount of runtime we have available. Given $\ell$, if $\eps$ is large enough (as dictated by \cref{eq:mlowerboundmaintheorem}), then we recover $x^*$ successfully.
    \item By \cite[Remark~2]{HsiehKM23}, for even $k$, we can actually take $m_0:= 2^{O(k)}\cdot \left(\frac{n}{\ell}\right)^{k/2-1}\cdot\frac{n\log n}{\delta^2}$, as opposed to the $\delta^{-4}$ dependence right now in \cref{fact:hkmrefute}. Tracking the dependence of $\delta$ throughout the proof of \cref{mthm:noisykxor}, for even $k$, we can take 
    \begin{equation*}m_0:= \frac{2^{O(k)}}{\eps^7}\cdot \left(\frac{n}{\ell}\right)^{k/2-1}\cdot n\log n\mcom\end{equation*}
    in \cref{eq:mlowerboundmaintheorem}, instead of the $\eps^{-11}$ dependence.
\end{enumerate}
\end{remark}

\subsection{Finding an approximate solution: proof of \texorpdfstring{\cref{lem:approxrecovery}}{Lemma~\ref{lem:approxrecovery}}}
\label{sec:approxrecovery}

In this subsection, we prove \cref{lem:approxrecovery}. We do this by showing that the algorithm below has the required guarantees.

\begin{mdframed}
    \begin{algorithm}[Recovering $x^*$ from Noisy XOR system]
    \label{alg:approxrecovery}    
    \mbox{}
    \begin{description}
        \item[Input:] A system of equations $(\cH, \{b_C\}_{C\in\cH})$ and a parameter $2k\leq\ell\leq n/8$.
        \item[Output:] A vector $x\in\Fits^n$.
        \item[Operation:] \mbox{}
            \begin{enumerate}[(1)]
                \item \textbf{If} $k$ is odd:
                \begin{enumerate}[(i)]
                    \item\label{oddkprocedure} Compute $x' := \pE_\mu[x]$, where $\pE_\mu\in\arg\max_{\pE\in\SoS_{2\ell}}\pE[\sum_{C\in\cH}b_Cx_C]$. 
                    \item Output $\widehat{x}:= \sgn(x')$.
                \end{enumerate}
                \item \textbf{Else if} $k$ is even:
                \begin{enumerate}[(i)]
                    \item Compute $X:= \pE_\mu[x^{\otimes 2}]\in\R^{n\times n}$, where $\pE_\mu\in\arg\max_{\pE\in\SoS_{2\ell}}\pE[\sum_{C\in\cH}b_Cx_C]$. 
                    \item\label{evenkprocedure} For each row $X^{(i)}$ of $X$, compute $\widehat{x}^{(i)}:= \sgn(X^{(i)})$.
                    \item\label{item:nearestdecode} For each $i,j \in [n]$, compute $\abs{\corr(\widehat{x}^{(i)}, \widehat{x}^{(j)})}$. For each $i \in [n]$, let $\delta_i$ be such that $\abs{\corr(\widehat{x}^{(i)}, \widehat{x}^{(j)})} \geq 1 - \delta_i$ for at least $0.99 n$ choices of $j \in [n]$. Let $i^* \in [n]$ be a minimizer of $\delta_i$. Output $\widehat{x}^{(i^*)}$.
                \end{enumerate}
                
        \end{enumerate}
    \end{description}
    \end{algorithm}
\end{mdframed}

We start by showing that the canonical degree $2\ell$ SoS program recovers a vector which is strongly correlated with $x^*$.

Observe that in Step (1i) or Step (2i), \Cref{alg:approxrecovery} first finds a pseudoexpectation which maximizes
$\pE[\sum_{C\in\cH}b_Cx_C]$. We show that for such a pseudodistribution, $\pE[\sum_{C\in\cH}b_Cx_C]$ is close to $\pE[\ip{x,x^*}^k]$ (when we normalize appropriately) with high probability over the choice of a random XOR instance, which we use to bound the correlation.
\begin{lemma}
\label{pseudodistrefute}
    Let $2\leq k\leq o(n^{1/4})$, and let $(\cH, \{b_C\}_{C\in\cH})\sim\LPN_k(x^*, m, \eps)$, where $\eps\in(0, \frac{1}{2}]$. Let $\ell$ be a parameter such that $2k\leq \ell\leq n/8$, and suppose 
    \begin{align}
    \label{mlowerbound}
        m\geq\frac{2^{O(k)}}{\eps\delta^5}\cdot\left(\frac{n}{\ell}\right)^{k/2-1}\cdot n\log n\mcom
    \end{align}
    where $\delta\in(0, 1/2)$ is a parameter. Then with probability $\geq 1 - 1/\poly(n)$ over the draw of $\cH$ and $\{x_C = b_C\}_{C\in\cH}$, we have that for any pseudo-expectation $\pE$ of degree $\geq 2\ell$, 
    \begin{equation*}\pE\left[\psi(x) - \frac{2\eps}{n^k}\langle x, x^*\rangle^k\right]\leq\delta + O\left(\frac{1}{\sqrt{n}}\right) \mcom\end{equation*}
    where $\psi(x) := \E_{C\sim\cH}[b_Cx_C]$.
\end{lemma}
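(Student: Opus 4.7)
The plan is to mimic the two-piece decomposition sketched in Section~\ref{sec:proofoverview}: rewrite each right-hand side $b_C$ as a mixture of a ``planted'' value and a Rademacher, and then control the two resulting contributions to $\psi(x)$ separately, in both cases ultimately reducing to Corollary~\ref{cor:refuterandomhypergraph}.

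Concretely, I would first re-describe the sampling of $(\cH,\{b_C\})$ in the equivalent two-stage form: independently for each $C\in\cH$, with probability $2\eps$ put $C$ into a subset $\cH_{2\eps}$ and set $b_C = x^*_C$; otherwise set $b_C = \sigma_C$ for a fresh Rademacher $\sigma_C$. Chernoff pins $|\cH_{2\eps}|$ to $2\eps m(1\pm o(1))$ (which is fine under the hypothesis $\eps m\gg \log n$), and conditional on its size $\cH_{2\eps}$ is a uniform i.i.d.\ sample from $[n]^k$, independent of $\cH\setminus \cH_{2\eps}$. Writing $\psi_1(x):=\tfrac{1}{m}\sum_{C\in\cH_{2\eps}} x^*_C x_C$ and $\psi_2(x):=\tfrac{1}{m}\sum_{C\in\cH\setminus\cH_{2\eps}}\sigma_C x_C$, and noting that $\tfrac{1}{n^k}\ip{x,x^*}^k = \E_{C\sim[n]^k}[x^*_C x_C]$, the target reduces to bounding $\pE[\psi_2(x)]$ and $\pE[\psi_1(x)] - 2\eps\,\pE[\E_{C\sim[n]^k}[x^*_C x_C]]$ separately.

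The $\psi_2$ piece is immediate from Corollary~\ref{cor:refuterandomhypergraph}: with $\cH\setminus\cH_{2\eps}$ a uniform random hypergraph of size $\Theta(m)$ and i.i.d.\ Rademacher right-hand sides, taking parameter $\delta_0=\Theta(\delta)$ gives $|\pE[\psi_2(x)]|\le O(\delta)+O(k^2/n)$ for every degree-$2\ell$ pseudo-expectation with probability $\ge 1-1/\poly(n)$, and the hypothesized $m$ is comfortably above the corollary's threshold with this $\delta_0$. For the $\psi_1$ piece I factor out $|\cH_{2\eps}|/m = 2\eps(1\pm O(\sqrt{\log n/(\eps m)}))$; the deviation contributes only $O(\sqrt{\eps\log n/m})\le O(1/\sqrt{n})$, and what remains is the empirical-vs.-population bound
\[
F(\cH_{2\eps}) \;:=\; \sup_{\pE}\pE\bigl[\E_{C\sim\cH_{2\eps}}[x^*_C x_C] - \E_{C\sim[n]^k}[x^*_C x_C]\bigr] \;\le\; O(\delta/\eps)
\]
with high probability over $\cH_{2\eps}$. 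I establish this by the standard symmetrization trick: introduce an independent ghost hypergraph $\cH''$ of the same size $N=|\cH_{2\eps}|$, write the population mean as $\E_{\cH''}[\E_{C\sim\cH''}[x^*_C x_C]]$, bring the sup over $\pE$ inside the expectation over $\cH''$, and exploit the exchangeability of the pairs $(C^{(1)}_i,C^{(2)}_i)$ to insert i.i.d.\ Rademacher signs $\sigma_i$, yielding
\[
\E_{\cH_{2\eps}}[F] \;\le\; 2\,\E_{\cH_{2\eps},\sigma}\Bigl[\sup_{\pE}\pE\bigl[\tfrac{1}{N}\sum_{C\in\cH_{2\eps}}\sigma_C x^*_C x_C\bigr]\Bigr].
\]
Since $\tau_C := \sigma_C x^*_C$ is again i.i.d.\ Rademacher (absorbing the fixed signs from $x^*$ into $\sigma$), Corollary~\ref{cor:refuterandomhypergraph} applied to $\cH_{2\eps}$ (of size $N\approx 2\eps m$) with parameter $\delta_0=\Theta(\delta/\eps)$ bounds the right-hand side by $O(\delta/\eps + k^2/n)$; the $1/(\eps\delta^5)$ factor in the hypothesis on $m$ is precisely what makes the required $N \gtrsim 2^{O(k)}(\eps/\delta)^{4}\cdot(n/\ell)^{k/2-1}\cdot n\log n$ go through. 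Finally, because swapping a single hyperedge of $\cH_{2\eps}$ perturbs $F$ by at most $2/N$, McDiarmid's inequality (Fact~\ref{fact:mcdiarmid}) upgrades this expectation bound to $F \le O(\delta/\eps + k^2/n + \sqrt{\log n/N})$ with probability $\ge 1-1/\poly(n)$. Multiplying by $2\eps$ and summing the three contributions, and using $N\gtrsim n\log n$ along with $k=o(n^{1/4})$ (so $k^2/n = o(1/\sqrt n)$), yields the claimed $\delta + O(1/\sqrt{n})$ after adjusting constants.

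The main subtlety will be the symmetrization step: one must carefully verify the equidistribution of the random polynomials $\sum_i(x^*_{C^{(1)}_i}x_{C^{(1)}_i} - x^*_{C^{(2)}_i}x_{C^{(2)}_i})$ and $\sum_i \sigma_i(x^*_{C^{(1)}_i}x_{C^{(1)}_i} - x^*_{C^{(2)}_i}x_{C^{(2)}_i})$ as polynomial-valued random variables, so that the sup over pseudo-expectations is preserved in distribution, and then track the polynomial factors in $\eps$ and $\delta$ through the two applications of Corollary~\ref{cor:refuterandomhypergraph} (which are invoked with quite different parameter choices). Everything else is routine Chernoff and McDiarmid bookkeeping.
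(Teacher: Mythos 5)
Your proposal follows essentially the same route as the paper's proof: the same two-stage resampling of the right-hand sides into $\cH_{2\eps}$ and its complement, a Chernoff bound on $|\cH_{2\eps}|$, \cref{cor:refuterandomhypergraph} for the Rademacher part, and a symmetrization-plus-McDiarmid argument for the planted part --- the latter is precisely what the paper isolates as \cref{lem:localtoglobal} (applied to $y = x\odot x^*$; your absorption of the fixed signs $x^*_C$ into the Rademacher variables is the same manoeuvre). Your parameter bookkeeping (taking $\delta_0=\Theta(\delta/\eps)$ for the planted piece and then multiplying by $2\eps$) differs slightly from the paper's (which takes $\delta_0=\delta$ and simply uses $m'/m\le 1$), but both fit within the $1/(\eps\delta^5)$ budget in \cref{mlowerbound}.

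The one genuine flaw is your treatment of the noise piece: you assert that $\cH\setminus\cH_{2\eps}$ has size $\Theta(m)$ and is therefore ``comfortably above the corollary's threshold.'' This fails when $\eps$ is close to $1/2$, which the hypothesis $\eps\in(0,\tfrac12]$ allows: for $\eps=\tfrac12$ the complement is empty, and for $\eps=\tfrac12-o(1)$ its expected size $(1-2\eps)m$ can be far below the $2^{O(k)}\delta^{-4}(n/\ell)^{k/2-1}n\log n$ threshold of \cref{cor:refuterandomhypergraph}, so that invocation is unjustified there. The fix is exactly the paper's case split: writing $m''=|\cH\setminus\cH_{2\eps}|$, if $m''\le\delta m$ then $|\pE[\psi_2(x)]|\le m''/m\le\delta$ trivially (since $|\pE[\sigma_C x_C]|\le 1$ for each $C$), and otherwise $m''>\delta m\ge 2^{O(k)}\delta^{-4}(n/\ell)^{k/2-1}n\log n$ and the corollary applies. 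A more minor point of the same flavour: when $\eps<2\delta$ your choice $\delta_0=\Theta(\delta/\eps)$ leaves the range $(0,1/2)$ required by \cref{cor:refuterandomhypergraph}, but there the trivial bound $F\le 2=O(\delta/\eps)$ (each pseudo-expectation value lies in $[-1,1]$) already gives a contribution $2\eps F\le O(\delta)$ after rescaling. With these patches your argument goes through and coincides with the paper's.
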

\begin{proof}
Consider the random variable 
\begin{equation*}X:= \begin{cases}
    \sigma & \text{with probability }1 - 2\eps\mcom\\
    1 & \text{with probability }2\eps
\end{cases}\mcom\end{equation*}
where $\sigma$ is a Rademacher random variable, i.e.\ it takes values $\pm 1$ with equal probability. Note that $\Pr(X = 1) = 1 - \Pr(X = -1) = \frac{1}{2} + \eps$, and thus $b_C$ is distributed as $X\cdot x^*_C$. Consequently, we can write 
\begin{align}
\label{mainterm}
    \pE\left[\sum_{C\in\cH}b_Cx_C - \frac{2\eps m}{n^k}\langle x, x^*\rangle^k\right] = \underbrace{\pE\left[\sum_{C\in\cH_{2\eps}}(x\odot x^*)_C - \frac{2\eps m}{n^k}\langle x, x^*\rangle^k\right]}_{(\operatorname{I})} + \underbrace{\pE\left[\sum_{C\in\cH\setminus\cH_{2\eps}}\sigma_C(x\odot x^*)_C\right]}_{(\operatorname{II})}\mcom
\end{align}
where $\cH_{2\eps}$ is a $2\eps$-sample of $\cH$, i.e.\ every element of $\cH$ is included in $\cH_{2\eps}$ independently with probability $2\eps$. We now bound the above terms one by one:

\parhead{Bounding term $(\operatorname{I})$.} Write $m':= |\cH_{2\eps}|$. Note that $\E[m'] = 2\eps m$, and by a Chernoff bound, 
\begin{align}
\label{mconc}
    m'\in 2\eps m\cdot \left[1\pm O\left(\frac{1}{\sqrt{n}}\right)\right]
\end{align}
with probability $\geq 1 - \exp(-\Omega(\eps m/n))\geq 1 - \exp(-\Omega(\log n))\geq 1 - 1/\poly(n)$, since $\eps m\geq 2^{O(k)}\frac{n^{\frac{k}{2}}\log n}{\ell^{\frac{k}{2} - 1}}\geq 2^{O(k)}n\log n$, since $\ell\leq n$.
Equivalently, $m'\geq\frac{2^{O(k)}}{\delta^5}\cdot\frac{n^{\frac{k}{2}}\log n}{\ell^{\frac{k}{2} - 1}}$ with probability $\geq 1 - 1/\poly(n)$.

To finish bounding this term, we will need the following lemma, which intuitively shows that degree-$2\ell$ Sum-of-Squares can certify that a random hypergraph with a sufficient number of samples is a good approximation of the complete hypergraph.
\begin{restatable}{lemma}{localtoglobal}
\label{lem:localtoglobal}
    Let $\cH$ be a collection of $m$ u.a.r.\ samples from $\cK:= [n]^k$, where $2\leq k\leq o(n^{1/4})$. Let $\delta\in(0, 1/2)$ be an arbitrary parameter. Then for any pseudo-expectation $\pE$ of degree $\geq 2\ell$, where 
    \begin{equation*}
        \frac{n}{8}\geq \ell\geq 2k \mcom \, \ell^{\frac{k}{2} - 1}\geq 2^{O(k)}\cdot \frac{n^{\frac{k}{2}}}{m}\cdot\frac{\log n}{\delta^4} \mcom
    \end{equation*}
    we have, with probability $\geq 1 - 1/\poly(n)$ over the draw of $\cH$,
    \begin{equation*}
        \pE\left[\E_{C\sim\cH}[x_C] - \E_{C\sim\cK}[x_C]\right]\leq\delta + O\left(\frac{1}{\sqrt{n}}\right)\mper
    \end{equation*}
\end{restatable}
We postpone the proof of \cref{lem:localtoglobal} to \cref{sec:localtoglobal}, and now use it to finish bounding term (I).

Since $\cH_{2\eps}$ is distributed as a collection of $m'$ u.a.r.\ samples of $\cK$, by \cref{lem:localtoglobal}, for any $\pE\in\SoS_{\geq 2\ell}$, we have that 
\begin{equation*}\pE\left[\E_{C\sim\cH_{2\eps}}[y_C] - \E_{C\sim\cK}[y_C]\right]\leq\delta + O\left(\frac{1}{\sqrt{n}}\right)\implies\pE\left[\sum_{C\in\cH_{2\eps}}y_C - m'\E_{C\sim\cK}y_C\right]\leq\left(\delta + O\left(\frac{1}{\sqrt{n}}\right)\right)m'\mcom\end{equation*}
where $y:= x\odot x^*$.\footnote{Note that technically \cref{lem:localtoglobal} is stated in terms of $x$, but since it works for all pseudo-distributions of degree $\geq 2\ell$, by translating we can put $y$ instead of $x$.} Now note that 
\begin{align}
\label{dotprodrel}
    \E_{C\sim\cK}[y_C] = \E_{C\sim\cK}[(x\odot x^*)_C] = \frac{1}{n^k}\sum_{C\in[n]^k}(x\odot x^*)_C = \frac{1}{n^k}\langle x, x^*\rangle^k\mper
\end{align}
Consequently, 
\begin{align*}
    \left(\delta + O\left(\frac{1}{\sqrt{n}}\right)\right)m'
    &\geq \pE\left[\sum_{C\in\cH_{2\eps}}y_C - m'\E_{C\sim\cK}y_C\right] \\
    &= \pE\left[\sum_{C\in\cH_{2\eps}}y_C - \frac{2\eps m}{n^k}\langle x, x^*\rangle^k\right] + \left(\frac{2\eps m - m'}{n^k}\right)\pE\langle x, x^*\rangle^k
\end{align*}
\begin{align}
\label{term1bound}
    \implies \pE\left[\sum_{C\in\cH_{2\eps}}y_C - \frac{2\eps m}{n^k}\langle x, x^*\rangle^k\right]\leq\left(\delta + O\left(\frac{1}{\sqrt{n}}\right)\right)m' +  \left|m'-2\eps m\right|\cdot\left|\frac{\pE\langle x, x^*\rangle^k}{n^k}\right|\mper
\end{align}
Now, by \cref{dotprodrel}, we have $\frac{\pE[\langle x, x^*\rangle^k]}{n^k} = \pE[\E_{C\sim\cK}[y_C]] = \E_{C\sim\cK}[\pE[x_C]]$. By \cref{sosbasicfacts}, $\pE[x_C]\in[-1, 1]$ for all $C\in\cK$, and thus $\E_{C\sim\cK}\pE[x_C]\in[-1, 1]\implies\left|\frac{\pE[\langle x, x^*\rangle^k]}{n^k}\right|\leq 1$. Also, by \cref{mconc}, $|m' - 2\eps m|\leq O(m'/\sqrt{n})$. Putting all this together in \cref{term1bound}, we obtain that 
\begin{equation*}\pE\left[\sum_{C\in\cH_{2\eps}}y_C - \frac{2\eps m}{n^k}\langle x, x^*\rangle^k\right]\leq m'\cdot\left(\delta + O\left(\frac{1}{\sqrt{n}}\right) +  O\left(\frac{1}{\sqrt{n}}\right)\right) \leq m\left(\delta + O\left(\frac{1}{\sqrt{n}}\right)\right)\mper\end{equation*}

\parhead{Bounding term $(\operatorname{II})$.} Write $m'':= |\cH\setminus\cH_{2\eps}|$. 
Here we make two cases:
\begin{enumerate}[(1)]
    \item $m''\leq\delta m$: In this case, $\pE\left[\sum_{C\in\cH\setminus\cH_{2\eps}}\sigma_C(x\odot x^*)_C\right] = \sum_{C\in\cH\setminus\cH_{2\eps}}\sigma_C\pE\left[(x\odot x^*)_C\right]\leq m''\leq \delta m$.
    \item $m'' > \delta m$: Note that $m''> \delta m\geq \frac{2^{O(k)}}{\eps\delta^4}\cdot\frac{n^{\frac{k}{2}}\log n}{\ell^{\frac{k}{2} - 1}}\geq \frac{2^{O(k)}}{\delta^4}\cdot\frac{n^{\frac{k}{2}}\log n}{\ell^{\frac{k}{2} - 1}}$. Also note that $\cH\setminus\cH_{2\eps}$ is a collection of $m''$ u.a.r.\ samples from $\cK:= [n]^k$. Consequently, by \cref{cor:refuterandomhypergraph}, for any $\pE\in\SoS_{\geq 2\ell}$, we have $\pE\sum_{C\in\cH\setminus\cH_{2\eps}}\sigma_Cx_C\leq m''\left(\delta + \frac{2k^2}{n}\right)\leq m\left(\delta + \frac{2k^2}{n}\right)$.
\end{enumerate}
Combining the above terms together, we can say that for any $\pE\in\SoS_{\geq 2\ell}$, we have 
\[\pE\left[\sum_{C\in\cH\setminus\cH_{2\eps}}\sigma_Cx_C\right]\leq m\left(\delta + \frac{2k^2}{n}\right)\leq m\left(\delta + O(n^{-1/2})\right)\mcom\]

where the last inequality follows since $k\leq o(n^{1/4})$. Finally, putting everything together in \cref{mainterm}, we obtain that 
\begin{equation*}\pE\left[\sum_{C\in\cH}b_Cx_C - \frac{2\eps m}{n^k}\langle x, x^*\rangle^k\right] \leq m\left(\delta + O\left(\frac{1}{\sqrt{n}}\right)\right) + m\left(\delta + O\left(\frac{1}{\sqrt{n}}\right)\right)\end{equation*}
\begin{equation*}\leq m\left(2\delta + O\left(\frac{1}{\sqrt{n}}\right)\right)\mcom\end{equation*}
and we absorb the $2$ in front of the $\delta$ in the $2^{O(k)}$ term in \cref{mlowerbound}.
\end{proof}
We can now prove the main lemma.
\begin{proof}[Proof of \Cref{lem:approxrecovery}]
Write $\nu:= \eps\delta$. Note that $m\geq\frac{2^{O(k)}}{\eps\nu^5}\cdot\left(\frac{n}{\ell}\right)^{k/2-1}\cdot n\log n$.

Let $\pE_\mu\in\arg\max_{\pE\in\SoS_{2\ell}}\psi(x)$ be any pseudo-distribution maximizing $\psi(x):= \E_{C\sim\cH}[b_Cx_C]$. By \cref{pseudodistrefute}, $\pE_\mu\left[\psi(x) - \frac{2\eps}{n^k}\langle x, x^*\rangle^k\right]\leq\nu + O\left(\frac{1}{\sqrt{n}}\right)\leq 2\nu\implies \pE_\mu[\psi(x)] \leq \frac{2\eps}{n^k}\pE_\mu\langle x, x^*\rangle^k + 2\nu$, where we can write $\nu + O(n^{-1/2})\leq 2\nu$ since $\nu = \eps\delta\geq\Omega(n^{-1/2})$.

Now, consider the probability distribution which places all its mass on $x^*$. By \cref{sosaxioms}, the expectation operator associated to this distribution is a valid degree $2\ell$ pseudo-distribution over $\Fits^n$. Furthermore, the expectation of $\psi(x)$ under this distribution is $\psi(x^*)$. Consequently, $\pE_\mu\psi(x)\geq\psi(x^*)$, and thus 
    \begin{align}
    \label{dotproductlowerbound}
    \frac{2\eps}{n^k}\pE_\mu\langle x, x^*\rangle^k + 2\nu 
    &\geq \psi(x^*) = \E_{C\sim\cH}[b_C x^*_C] 
    \overset{(*)}{\geq} 2\eps\left(1 - O\left(\frac{1}{\sqrt{n}}\right)\right) \\
    \implies \frac{1}{n^k} \pE_\mu\langle x, x^*\rangle^k 
    &\geq 1 - \frac{\nu}{\eps} - O\left(\frac{1}{\sqrt{n}}\right) =: 1 - \xi \mcom
\end{align}
    where $(\ast)$ follows with probability $\geq 1 - 1/\poly(n)$ by a Chernoff bound. Note that $\xi = \nu/\eps + O(n^{-1/2}) = \delta + O(n^{-1/2}) = O(\delta)$, since $\delta\geq\eps\delta\geq\Omega(n^{-1/2})$.

    We now make cases based on whether $k$ is odd or even:
    \begin{enumerate}[(1)]
        \item $k$ is odd: In this case, \cref{fact:highcorrsos} implies that $\pE_\mu[\langle x, x^*\rangle]\geq n(1 - \xi)\implies\langle\pE_\mu[x], x^*\rangle\geq n(1 - 2\xi)$. Write $z:= \pE_\mu[x]$. Then note that $\|z\|_2^2 = \sum_i\pE_\mu[x_i]^2\overset{\text{\cref{sosbasicfacts}}}{\leq} n$. Consequently, $\corr(z, x^*) = \abs{\ip{z, x^*}}/\|z\|_2\cdot\|x^*\|_2\geq\langle z, x^*\rangle/n\geq 1 - 2\xi$. Now, by \cref{sosfkp}, the vector $x'$ we compute (in \cref{oddkprocedure} of \cref{alg:recxxor}) is very close to its ``idealized value'' $z$, i.e.\ $\|x' - z\|_1\leq 2^{-n}$. Since $2\xi\geq\Omega(n^{-1/2})\gg 2^{-n}$, $2\xi + O(2^{-n})\leq O(\xi) \leq O(\delta)$. 
        Now, by \cref{prop:realtoboolean}, it follows that for $\widehat{x} \defeq \sgn(x')$, we have that $\corr(\widehat{x}, x^*) \geq 1 - O(\delta)$, which finishes the proof in this case.

        \item $k$ is even: In this case, \cref{fact:highcorrsos} implies that $\pE_\mu[\langle x, x^*\rangle^2]\geq n^2(1 - 2\xi)$. Now, note that $\langle x, x^*\rangle^2 = \langle x^{\otimes 2}, x^{*\otimes 2}\rangle$, and thus $\langle\pE_\mu[x^{\otimes 2}], x^{*\otimes 2}\rangle\geq n^2(1 - 2\xi)$. Write $Z:= \pE_\mu[x^{\otimes 2}]$, where $Z$ is to be viewed as a $n\times n$ matrix. Then note that $\langle Z, x^{*\otimes 2}\rangle = \sum_{i\in[n]}\langle z^{(i)}, x^*_ix^*\rangle\geq n^2(1 - 2\xi)$, where $z^{(i)}$ is the $i^{\mathrm{th}}$ row of $Z$. Also note that $\|z^{(i)}\|_2^2 = \sum_j\pE_\mu\left[x_ix_j\right]^2\overset{\text{\cref{sosbasicfacts}}}{\leq} \sum_j \pE_\mu\left[x^2_ix^2_j\right] = n$, and thus
\begin{flalign*}
\sum_{i\in[n]}\abs{\corr(z^{(i)}, x^*)} &= \sum_{i\in[n]}\frac{|\langle z^{(i)}, x^*\rangle|}{\|z^{(i)}\|_2\cdot\|x^*\|_2} 
= \sum_{i\in[n]}\frac{|\langle z^{(i)}, x_i^*x^*\rangle|}{\|z^{(i)}\|_2\cdot\sqrt{n}} \\
&\geq \frac{1}{n}\sum_{i\in[n]}\abs{\langle z^{(i)}, x_i^*x^*\rangle}
\geq\frac{1}{n}\left|\sum_{i\in[n]}\langle z^{(i)}, x_i^*x^*\rangle\right|\geq n\left(1 - 2\xi\right)\mper
\end{flalign*}
On the other hand, $\sum_{i\in[n]}\abs{\corr(z^{(i)}, x^*)}\leq n$, since $\corr(z^{(i)}, x^*)\in[-1, 1]$ for each $i\in[n]$. Consequently, by an averaging argument, we must have that for $\geq 0.99n$ indices $i\in[n]$, $\abs{\corr(z^{(i)}, x^*)}\geq 1 - 200\xi$.

As before, if $X^{(i)}$ is the vector we compute (in \cref{evenkprocedure} of \cref{alg:approxrecovery}), then $\|X^{(i)} - z^{(i)}\|_1\leq 2^{-n}$, and thus for $\geq 0.99 n$ indices $i\in[n]$, we have $\abs{\corr(X^{(i)}, x^*)}\geq 1 - O(\xi) \geq 1 - O(\delta)$, and we can absorb the constant in $O(\cdot)$ into the lower bound for $m$, as usual.

Now, let $\widehat{x}^{(i)} = \sgn(X^{(i)})$ for each $i$. By \cref{prop:realtoboolean}, we have that $\abs{\corr(\widehat{x}^{(i)}, x^*)} \geq 1 - O(\delta)$ for at least $0.99 n$ indices $i \in [n]$. Call these indices the ``good'' indices. We now observe that the $i^*$ chosen in \cref{item:nearestdecode} must be such that $\abs{\corr(\widehat{x}^{(i^*)}, x^*)} \geq 1 - O(\delta)$. Indeed, for any choice of one of the ``good'' $0.99 n$ indices $i$, we must have that $\abs{\corr(\widehat{x}^{(i)}, \widehat{x}^{(j})} \geq 1 - O(\delta)$ for every other good $j$, i.e., $\delta_i = O(\delta)$. Therefore, for the particular $i^*$ chosen in \cref{item:nearestdecode}, it follows that $\abs{\corr(\widehat{x}^{(i^*)}, \widehat{x}^{(j})} \geq 1 - O(\delta)$ for some good index $j$, there are $ \geq 0.99 n$ good indices. We thus conclude by triangle inequality that $\abs{\corr(\widehat{x}^{(i^*)}, x^*)} \geq 1 - O(\delta)$, which finishes the proof.\qedhere
\end{enumerate}
\end{proof}

\subsubsection{Proof of \texorpdfstring{\cref{lem:localtoglobal}}{Lemma~\ref{lem:localtoglobal}}}
\label{sec:localtoglobal}
In this subsection, we prove \cref{lem:localtoglobal}, restated below.
\localtoglobal*

\begin{proof}
\label{refutecompletegraphproof}
Throughout what follows, we write $\supxe$ to mean $\sup_{\pE\in\SoS_{\geq 2\ell}}$. Now define 
\begin{equation*}f(\cH):= \supxe\pE\left[\E_{C\sim\cH}[x_C] - \E_{C\sim\cK}[x_C]\right]\mper\end{equation*}
Note that we need to show that $f(\cH)\leq\delta + O(n^{-1/2})$.

Now, since $\E_{C\sim\cK}[\cdot] = \E_{\cH'}\E_{C\sim\cH'}[\cdot]$, we have that
\begin{equation*}f(\cH) = \supxe\E_{\cH'}\left[\E_{C\sim\cH}\pE[x_C] - \E_{C\sim\cH'} \pE[x_C]\right]\mcom\end{equation*}
where $\E_{\cH'}$ takes expectations over hypergraphs which are $m$ u.a.r.\ samples of $\cK$.

However, 
\begin{equation*}\E_{\cH}f(\cH) = \E_{\cH}\supxe\E_{\cH'}\frac{1}{m}\sum_{i = 1}^m\pE(x_{A_i} - x_{A'_i})\leq\E_{\cH, \cH'}\supxe\frac{1}{m}\sum_{i = 1}^m\pE(x_{A_i} - x_{A'_i})\end{equation*} 
\begin{equation*}\overset{(\ast)}{=} \E_{\cH, \cH', \sigma}\supxe\frac{1}{m}\sum_{i = 1}^m\pE[\sigma_i(x_{A_i} - x_{A'_i})]\leq\E_{\cH, \cH', \sigma}\left[\supxe\frac{1}{m}\sum_{i = 1}^m\pE\sigma_ix_{A_i} + \supxe\frac{1}{m}\sum_{i = 1}^m\pE\left[-\sigma_ix_{A'_i}\right]\right]\end{equation*}
\begin{equation*} = 2\cdot\E_{\cH, \sigma}\supxe\frac{1}{m}\sum_{i = 1}^m\pE\sigma_ix_{A_i} = 2\cdot\E_{\cH, \sigma}\supxe\pE\E_{C\sim\cH}\sigma_Cx_C\end{equation*} 
\begin{align}
\label{fexpectationupperbound}
    \implies\E_{\cH}f(\cH)\leq 2\cdot\E_{\cH, \sigma}\supxe\pE\E_{C\sim\cH}\sigma_Cx_C\mper
\end{align}
Here, in $(\ast)$, $\sigma = (\sigma_i)_{i\in[m]}$ is a collection of $m$ i.i.d. Rademacher random variables, and $\cH = (A_1, \ldots, A_m), \cH' = (A'_1, \ldots, A'_m)$ enumerates the order in which $\cH, \cH'$ were sampled. Note that for any $i$, and any two clauses $C, C'\in\cK$, $\Pr(C = A_i, C' = A'_i) = \Pr(C' = A_i, C = A'_i)$, and thus we can introduce the random sign $\sigma_i$.

Now, by triangle inequality, 
\begin{align}
\label{ftriang}
    f(\cH)\leq|f(\cH) - \E_{\cH}f(\cH)| + |\E_{\cH}f(\cH)|\mper
\end{align}
Now, note that $|\E[\cdot]|\leq\E[|\cdot|], |\sup(\cdot)|\leq\sup(|\cdot|)$. Consequently, by \cref{fexpectationupperbound}, 
\begin{equation*}|\E_{\cH}f(\cH)|\leq 2\cdot\E_{\cH, \sigma}\supxe\left|\pE\E_{C\sim\cH}\sigma_Cx_C\right|\mper\end{equation*}
Now, by \cref{cor:refuterandomhypergraph}, for randomly chosen $\cH, \sigma$, we have that $\supxe\left|\pE\E_{C\sim\cH}\sigma_Cx_C\right|\leq \delta + 2k^2/n$ with probability $\geq 1 - 1/\poly(n)=: 1 - \gamma$. On the other hand, note that $\pE\E_{C\sim\cH}\sigma_Cx_C = \E_{C\sim\cH}\left[\sigma_C\pE[x_C]\right]$. By \cref{sosbasicfacts}, $\pE[x_C]\in[-1, 1]$ for any $C\in\cK$, and thus for all $\cH, \sigma$, we have that $\E_{C\sim\cH}[\sigma_C\pE[x_C]]\in[-1, 1]$, and thus $\supxe\left|\pE\E_{C\sim\cH}\sigma_Cx_C\right|\leq 1$. Consequently, 
\begin{equation*}\E_{\cH, \sigma}\supxe\left|\pE\E_{C\sim\cH}\sigma_Cx_C\right|\leq\left(\delta + \frac{2k^2}{n}\right)\cdot(1 - \gamma) + 1\cdot \gamma\leq \delta + \frac{2k^2}{n} +  \gamma\leq \delta + O\left(\frac{k^2}{n}\right)\mcom\end{equation*}
where the last inequality follows since $\gamma\leq 1/\poly(n)$, and we can choose constants appropriately such that $\gamma\ll 1/n$.

Consequently, $|\E_{\cH}f(\cH)|\leq 2\cdot\E_{\cH, \sigma}\supxe\left|\pE\E_{C\sim\cH}\sigma_Cx_C\right|\leq 2\delta + O(k^2/n)$. 

Now, let $\cH = (A_1, \ldots, A_i, \ldots, A_m)$, and let $\cG = (A_1, \ldots, A'_i,\ldots,A_m)$ be hypergraphs which differ only in the $i^{\mathrm{th}}$ hyperedge. Then
\begin{equation*}|f(\cH) - f(\cG)| = \left|\supxe\pE\left[\E_{C\sim\cH} x_C - \E_{C\sim\cK} x_C\right] - \supxe\pE\left[\E_{C\sim\cG} x_C - \E_{C\sim\cK} x_C\right]\right|\end{equation*} \begin{equation*}\overset{(**)}{\leq}\supxe\left|\pE\left[\E_{C\sim\cH} x_C - \E_{C\sim\cK} x_C\right] - \pE\left[\E_{C\sim\cG} x_C - \E_{C\sim\cK} x_C\right]\right| = \supxe\left|\pE\left[\E_{C\sim\cH} x_C - \E_{C\sim\cG} x_C\right]\right|\mper\end{equation*}
Here $(**)$ follows from the fact that $|\sup g_1 - \sup g_2|\leq\sup|g_1 - g_2|$ for any functions $g_1, g_2$.

Now, 
\begin{equation*}\E_{C\sim\cH} x_C - \E_{C\sim\cG} x_C = \frac{1}{m}\left(\sum_{j\in[m]}x_{A_j} - x_{A'_i} - \sum_{j\in[m]\setminus\{i\}}x_{A_j}\right) = \frac{1}{m}(x_{A_i} - x_{A'_i})\end{equation*}
\begin{equation*}\implies\pE\left[\E_{C\sim\cH} x_C - \E_{C\sim\cG} x_C\right] = \frac{1}{m}(\pE x_{A_i} - \pE x_{A'_i})\mper\end{equation*}
Now, for any pseudo-expectation $\pE$ and any $S\seq[n]$, we have that $\pE[x_S]\in[-1, 1]$. Consequently, $\frac{1}{m}(\pE x_{A_i} - \pE x_{A'_i})\in[-\frac{2}{m}, \frac{2}{m}]\implies|\frac{1}{m}(\pE x_{A_i} - \pE x_{A'_i})|\leq\frac{2}{m}$. Consequently, \begin{equation*}\supxe\left|\pE\left[\E_{C\sim\cH} x_C - \E_{C\sim\cG} x_C\right]\right|\leq\frac{2}{m}\mcom\end{equation*}
i.e.\ $|f(\cH) - f(\cG)|\leq\frac{2}{m}$ if $\cH, \cG$ differ in at most one entry. Consequently, by \cref{fact:mcdiarmid}, we have that $|f(\cH) - \E_{\cH}[f(\cH)]|\leq O(\sqrt{\log n/m})$ with probability $\geq 1 - 1/\poly(n)$.

Putting everything together, we have, by \cref{ftriang}, with probability $\geq 1 - 1/\poly(n)$ over the randomness of $\cH$,
\begin{equation*}f(\cH)\leq 2\delta + O\left(\frac{k^2}{n}\right) + O\left(\sqrt{\frac{\log n}{m}}\right)\leq 2\delta + O\left(\frac{1}{\sqrt{n}}\right)\mcom\end{equation*}
where the last inequality follows since $m\geq\Omega(n\log n)$ and $k\leq o(n^{1/4})$. Finally, we can absorb the $2$ in front of $\delta$ by increasing the $2^{O(k)}$ term in the hypothesis.
\end{proof}

\subsection{Rounding approximate solutions: proof of \texorpdfstring{\cref{lem:rounding}}{Lemma~\ref{lem:rounding}}}
\label{sec:rounding}

In this subsection, we prove \cref{lem:rounding}.

\begin{proof}

Since $|\cH|=\Omega(\eps^{-2}n\log n)$, and since $|[n]^k\setminus[n]^{(k)}|/n^k\leq O(k^2/n) = o(1)$ (see \cref{prop:hypergraphclean}), if we write $\cH':= \cH\cap[n]^{(k)}$, with probability $\geq 1 - \exp(-\Omega(n))$ we have $|\cH'|\geq \Omega(\eps^{-2}n\log n)$. Note that all tuples in $\cH'$ have distinct entries. For every $i\in[n]$, write $\cch:= \{C\setminus\{i\}: i\in C\in\cH'\}$. Note that the $\cch$s may have repeated hyperedges. Also for any $T\in\cch$, denote by $(T, i)$ the tuple in $\cH'$ that $T$ was obtained from. Since the hyperedges of $\cH'$ are independent uniform samples from $[n]^{(k)}$, the hyperedges of $\cch$ are independent uniform samples from $([n]\setminus\{i\})^{(k - 1)}$.

Consider the following algorithm:
\begin{mdframed}
    \begin{algorithm}[Exact Recovery from an Approximately Correct Solution]
    \label{alg:exactrecrand}
    \label{alg:rounding}
    \mbox{}
    \begin{description}
        \item[Input:] A $k$-XOR instance $(\cH, \{b_C\}_{C \in \cH})$, and a vector $\widetilde{x}\in\Fits^n$.
        \item[Output:] A vector $x\in\Fits^n$.
        \item[Operation:] \mbox{}
        \begin{enumerate}[(1)]
            \item For every $i\in[n]$, compute $S_i:= \{b_{(T, i)}\widetilde{x}_{T}:T\in\cch\}$, which is a multiset over $\Fits$. Here $\cch:= \{C\setminus\{i\}: i\in C\in\cH\cap[n]^{(k)}\}$.
            \item Output $x\in\Fits^n$, where 
                \begin{equation*}x_i:= \begin{cases}
                    1 & \text{if }\geq|S_i|/2\text{ entries of }S_i\text{ are }1\mcom\\
                    -1 & \text{otherwise}
                \end{cases}\mper\end{equation*}
        \end{enumerate}
    \end{description}
    \end{algorithm}
\end{mdframed}
It is easy to see that the above algorithm runs in $\poly(m,n)$ time. We will argue that if $k$ is odd, then $x = x^*$ with probability $\geq 1 - 1/\poly(n)$, and if $k$ is even, then $x \in \{x^*, -x^*\}$ with probability $1 - 1/\poly(n)$. 

Let us first consider the case where $\corr(\widehat{x},x^*) \geq 1 - \delta$ (for both odd and even $k$).
It suffices to show that $x_i\neq x^*_i$ with probability $\leq O(n^{-2})$ for all $i\in[n]$, as then we are done by a union bound over all $i\in[n]$.

Write $B:= \{i\in[n]: \widetilde{x}_i \neq x^*_i\}$. Note that $\langle \widetilde{x}, x^*\rangle = n - 2|B|$, and thus $|B|\leq\delta n/2$. We call $T\in\cch$ \emph{good} if it satisfies the following conditions:
    \begin{enumerate}[(1)]
        \item $T\cap B = \emptyset$, 
        \item $b_{(T, i)} = x^*_{(T, i)}$.
    \end{enumerate}
    Note that if $T\in\cch$ is good, then $b_{(T, i)}\widetilde{x}_{T} = x^*_{(T, i)}\widetilde{x}_{T} = x^*_{(T, i)}x^*_{T} = x^*_i$. Consequently, $x_i\neq x^*_i$ if $<|S_i|/2=|\cch|/2$ elements of $\cch$ are good. 
    
Since $\widetilde{x}$ is independent of $\cH$ and all $\cch$s, we see that the probability that any given element of $\cch$ does not intersect $B$ is 
\begin{equation*}\geq\prod_{i=1}^{k-1}\frac{n-|B|-i}{n-i}\geq\left(\frac{n - |B| - k}{n - k}\right)^{k - 1}\geq\left(\frac{1 - \frac{\delta}{2} - \frac{k}{n}}{1 - \frac{k}{n}}\right)^{k-1}\geq 1 - \frac{(k - 1)\delta}{2(1 - \frac{k}{n})}\geq 1 - \frac{k\delta}{2}\geq 1 - \frac{\eps}{2}\mcom\end{equation*}
where we recall $\cch$ contains independent uniformly random tuples from $([n]\setminus\{i\})^{(k - 1)}$, and $\frac{k-1}{1 - k/n}\leq k$ for large enough $n$.  Since the noise in the equations $\{x_C = b_C\}_{C\in\cH}$ is independent of the randomness in sampling $\cH$, the probability that any given element of $\cch$ is good is \begin{equation*}\geq\left(1-\frac{\eps}{2}\right)\left(\frac{1}{2}+\eps\right)\geq \frac{1}{2} + \frac{\eps}{4}\mper\end{equation*}
Now, by a Chernoff bound, $|\cch|\geq \frac{1}{2n}\cdot|\cH'|=\Omega(\eps^{-2}\log n)$ with probability \begin{equation*}\geq 1-\exp\left(-\Omega(|\cH'|/n)\right)=1-\exp\left(-\Omega(\eps^{-2}\log n)\right)\geq 1-1/\poly(n)\mper\end{equation*}
By another Chernoff bound, the probability that $\geq \frac{1}{2}$ fraction of $\cch$ is good is \begin{equation*}\geq 1- \exp(-\Omega(\eps^2|\cch|))\geq 1-\exp(-\Omega(\log n))\geq 1-1/\poly(n)\mcom\end{equation*}
as desired.

It remains to handle the case when $k$ is even and $\corr(-\widehat{x}, x^*) \geq 1 - \delta$. Let us make the following simple observation: since $k$ is even, the distributions $\LPN_k(x^*, m, \eps)$ and $\LPN_k(-x^*, m, \eps)$ are identical. Hence, we can follow the above analysis using $-x^*$ as the planted assignment since $\corr(\widehat{x}, -x^*) \geq 1 - \delta$, and the aforementioned argument implies that we recover $-x^*$ with probability at least $1 - 1/\poly(n)$, which finishes the proof.
\end{proof}

\bibliographystyle{alpha}
\bibliography{main.bbl}
\appendix

\section{Proofs of Auxilliary Statements}
\label{app:sosrounding}
In this appendix, we prove \cref{fact:highcorrsos,cor:refuterandomhypergraph}.

\subsection{Proof of \texorpdfstring{\cref{fact:highcorrsos}}{Fact~\ref{fact:highcorrsos}}}
\label{sec:highcorrsos}
In this subsection, we prove \cref{fact:highcorrsos}, which is a generalization of \cite[Lemma~A.5]{HopkinsSS15} (which holds for $k = 3$) for all $k$. It says that if $\pE[\langle x, x^*\rangle^k]$ is large for a pseudo-expectation $\pE$ of degree $\geq k + 2$, then $\pE[x]$ (resp.\ $\pE[x^{\otimes 2}]$) is strongly correlated with $x^*$ (resp.\ $x^{*\otimes 2}$) for odd (resp.\ even) $k$.
\highcorrsos*
\begin{proof}
\label{highcorrsosproof}

We split into cases based on the parity of $k$:

 \parhead{Case 1: $k$ is odd.} Let $p_k\in\R[u]$ such that $p_k(u)= 1-2(u/n)^k+u/n $. Note that $p_k(u)\geq 0$ for all $u\in [-n,n]$. By \cite[Fact~3.2]{OZhou13}, there exist SoS polynomials $s_0, s_1, s_2$ such that $\deg(s_0)\leq k, \deg(s_1), \deg(s_2)\leq k - 1$, and $p_k(u)=s_0(u)+s_1(u)(n+u)+s_2(u)(n-u).$

 Since $s_0$ is a SoS polynomial, $\pE[s_0]\geq 0$, and thus we have
    \begin{align*}
        \pE[ p_k(\langle x, x^* \rangle)]\geq \pE \left[s_1(\langle x, x^* \rangle)(n+\langle x, x^* \rangle)\right]+\pE \left[s_2(\langle x, x^* \rangle)(n-\langle x, x^* \rangle)\right]\mper
    \end{align*}
    Now, note that $\pE[s_1(x)\|x + x^*\|_2^2]\geq 0$, since $s_1$ is a SoS polynomial of degree $\leq k - 1$, $\|x + x^*\|_2^2 = \sum_{i = 1}^n(x_i + x^*_i)^2$ is a SoS polynomial of degree $2$, and $\pE$ is a pseudo-expectation of degree $\geq k + 1$. Consequently, 
    \begin{equation*}\pE[s_1(x)\|x + x^*\|_2^2]\geq 0\Longleftrightarrow\pE\left[s_1(x)\left(\langle x, x^*\rangle + \frac{1}{2}(\|x\|_2^2 + \|x^*\|_2^2)\right)\right]\geq 0\end{equation*}
    \begin{equation*}\Longleftrightarrow \pE \left[s_1(\langle x, x^* \rangle)\langle x, x^* \rangle\right]\geq-\frac{1}{2}\pE[s_1(\langle x, x^*\rangle)(\lVert x\rVert^2+n)] = -n\pE[s_1(\langle x, x^*\rangle)]\mcom\end{equation*}
    where the last equality follows from \Cref{sosbasicfacts}. Therefore, we have
    \begin{align*}
        \pE \left[s_1(\langle x, x^* \rangle)(n+\langle x, x^* \rangle)\right]
        &\geq n\pE[s_1(\langle x, x^*\rangle)]-n\pE[s_1(\langle x, x^*\rangle)] = 0\mcom
    \end{align*}
    Similarly, using the fact $\pE[s_2(x)\|x - x^*\|_2^2]\geq 0$, we have
 \begin{align*}
        \pE \left[s_2(\langle x, x^* \rangle)(n-\langle x, x^* \rangle)\right]
        &\geq n\pE[s_2(\langle x, x^*\rangle)]-n\pE[s_2(\langle x, x^*\rangle)]=0\mper
    \end{align*}
    Thus by combining these inequalities, we get $\pE [p_k(\langle x, x^*\rangle)]\geq 0$. This is equivalent to 
    \begin{align*}
        \pE \langle x, x^* \rangle\geq n\left( 2\frac{\pE\langle x, x^* \rangle^k}{n^k} - 1\right)\geq n(2(1-\delta)-1) = n(1-2\delta)\mcom
    \end{align*}
    as desired.

    \parhead{Case 2: $k$ is even.} 
    Let $p_k\in\R[u]$ such that $p_k(u)=1-2(u/n)^k+(u/n)^2$. Note that $p_k(u)\geq 0$ for all $u\in [-n,n]$. By \cite[Fact~3.2]{OZhou13}, there exist SoS polynomials $s_0, s_1$ such that $\deg(s_0)\leq k, \deg(s_1)\leq k - 2$, and $p_k(u)=s_0(u)+s_1(u)(n+u)(n-u)$.
    
    In this case, we have 
    \begin{align*}
        \pE[p_k(\langle x, x^* \rangle)]\geq \pE \left[s_1(\langle x, x^* \rangle)(n^2 - \langle x, x^* \rangle^2)\right]\mper
    \end{align*}
    However, note that 
    \begin{equation*}\langle x, x^*\rangle^2 = \left(\sum_{i = 1}^n x_ix^*_i\right)^2 = n + 2\sum_{1\leq i < j\leq n}x_ix_jx^*_ix^*_j = n + \sum_{1\leq i < j\leq n}\left(2 - (x_ix^*_j - x^*_ix_j)^2\right)\end{equation*}
    \begin{equation*}= n + 2\binom{n}{2} - \sum_{1\leq i < j\leq n}(x_ix^*_j - x^*_ix_j)^2\implies n^2 - \langle x, x^*\rangle^2 = \sum_{1\leq i < j\leq n}(x_ix^*_j - x^*_ix_j)^2\mper\end{equation*}
    Consequently, $n^2 - \langle x, x^*\rangle^2$ is a degree $4$ SoS polynomial on $\Fits^n$. Thus, by \cref{sosaxioms}, we have $\pE \left[s_1(\langle x, x^* \rangle)(n^2 - \langle x, x^* \rangle^2)\right]\geq 0$,\footnote{Formally, one would write $n^2 - \langle x, x^*\rangle^2 = \sum_{1\leq i < j\leq n}(x_ix^*_j - x_jx^*_i)^2 + n^2 - \sum_{i\in[n]}(x_ix^*_i)^2 - \sum_{1\leq i < j\leq n}(x_i^2x^{*2}_j + x_j^2x^{*2}_i)$. Since $\pE[fx_i^2] = \pE[f]$ for any function $f$ by \cref{sosaxioms}, we can then get rid of the $n^2 - \sum_{i\in[n]}(x_ix^*_i)^2 - \sum_{1\leq i < j\leq n}(x_i^2x^{*2}_j + x_j^2x^{*2}_i)$ term.} since $\deg(s_1)\leq k - 2$, the degree of $n^2 - \langle x, x^*\rangle^2$ is $4$, and $\pE$ is a pseudo-expectation of degree $\geq k + 2$. Thus $\pE \left[p_k(\langle x, x^* \rangle)\right]\geq 0$. But then plugging in the definition of $p_k(u)$ we obtain 
    \begin{align*}
        \pE \langle x, x^* \rangle^2\geq n^2\left(2\frac{\pE\langle x, x^* \rangle^k}{n^k} - 1\right)\geq n^2(2(1-\delta)-1) = n^2(1-2\delta)\mcom
    \end{align*}
    as desired.
\end{proof}

\subsection{Proof of \texorpdfstring{\cref{cor:refuterandomhypergraph}}{Corollary~\ref{cor:refuterandomhypergraph}}}

In this subsection, we prove \cref{cor:refuterandomhypergraph}, which is \cref{fact:hkmrefute} except we allow the hyperedges of $\cH$ to be drawn from $[n]^k$ instead of $[n]^{(k)}$, i.e., they need not have all distinct entries. This is a straightforward corollary of applying \cref{prop:hypergraphclean}.
\refuterandomhypergraph*
\begin{proof}
\label{refuterandomhypergraphproof}
    Write $\cH':= \cH\cap[n]^{(k)}$. By \cref{prop:hypergraphclean}, $|\cH\setminus\cH'|\leq 2mk^2/n$ with probability $\geq 1 - \exp(-\Omega(mk^2/n))$. Now, note that $m\geq 2^{O(k)}n\log n$ since $\ell\leq n$. Consequently, $\exp(-\Omega(mk^2/n))\leq 1/\poly(n)$.
    
    Now, $\pE\sum_{C\in\cH}\sigma_Cx_C = \pE\sum_{C\in\cH'}\sigma_Cx_C + \pE\sum_{C\in\cH\setminus\cH'}\sigma_Cx_C\leq \pE\sum_{C\in\cH'}\sigma_Cx_C + |\cH\setminus\cH'|$.\footnote{Here, we use the fact that $\pE\sum_{C\in\cH\setminus\cH'}\sigma_Cx_C = \sum_{C\in\cH\setminus\cH'}\pE[\sigma_Cx_C]$. Since $(\sigma_Cx_C)^2 = 1$, by \cref{sosbasicfacts}, $\pE[\sigma_Cx_C]^2\leq\pE[(\sigma_Cx_C)^2] = \pE[1] = 1\implies|\pE[\sigma_Cx_C]|\leq 1$ for all pseudo-expectations.} On the other hand, by \cref{fact:hkmrefute},\footnote{Since we're applying \cref{fact:hkmrefute} to $\cH'$, technically we should have $|\cH'|$ in the lower bound instead of $m$, but we can absorb the extra factor of $1/(1 - 2k^2/n)$ in the $2^{O(k)}$ factor.} we have $|\pE[\sum_{C\in\cH'}\sigma_Cx_C]|\leq\delta|\cH'|\leq\delta m$. Consequently, 
    \begin{equation*}\left|\pE\E_{C\sim\cH}\sigma_Cx_C\right|\leq\frac{1}{m}\left|\pE\sum_{C\in\cH'}\sigma_Cx_C\right| + \frac{|\cH\setminus\cH'|}{m}\leq\delta + \frac{2k^2}{n}\mper\qedhere\end{equation*}
\end{proof}

\end{document}